\newtheorem{theorem}{Theorem}[section]
\newtheorem{lemma}[theorem]{Lemma}
\newtheorem{proposition}[theorem]{Proposition}
\theoremstyle{definition}
\newtheorem{example}[theorem]{Example}
\newtheorem{remark}[theorem]{Remark}
\newcommand{\Z}{\mathds{Z}}
\newcommand{\R}{\mathds{R}}
\newcommand{\C}{\mathcal{C}}
\newcommand{\M}{\mathcal{M}}
\newcommand{\D}{\mathcal{D}}
\newcommand{\K}{\mathcal{K}}
\newcommand{\sign}{\mathrm{sign}}
\newcommand{\e}{\varepsilon}
\newcommand{\Pf}{\mathrm{Pf}}
\DeclareSymbolFont{EulerScript}{U}{eus}{m}{n}
\DeclareSymbolFontAlphabet\mathscr{EulerScript}
\begin{document}

\title{Identities between dimer partition functions on different surfaces}

\author{David Cimasoni}
\address{Universit\'e de Gen\`eve, Section de math\'ematiques, 2 rue du Li\`evre, 1211 Gen\`eve, Switzerland}
\email{david.cimasoni@unige.ch}
\author{Anh Minh Pham}
\address{Universit\'e de Gen\`eve, Section de math\'ematiques, 2 rue du Li\`evre, 1211 Gen\`eve, Switzerland}
\email{AnhMinh.Pham@unige.ch}

\subjclass[2010]{Primary 82B20; Secondary 05C70, 05C10, 57M15} 
\keywords{Dimer partition function, perfect matching, non-orientable surface, Kasteleyn orientation, Pfaffian.}

\begin{abstract}
Given a weighted graph~$G$ embedded in a non-orientable surface~$\Sigma$, one can consider the corresponding weighted graph~$\widetilde{G}$ embedded in the
so-called orientation cover~$\widetilde\Sigma$ of~$\Sigma$. We prove identities relating twisted partition functions of the dimer model on these two graphs.
When~$\Sigma$ is the M\"obius strip or the Klein bottle, then~$\widetilde\Sigma$ is the cylinder or the torus, respectively, and under some natural assumptions,
these identities imply relations between the genuine dimer partition functions~$Z(G)$ and~$Z(\widetilde{G})$. For example, we show that if~$G$ is a locally but not
globally bipartite graph embedded in the M\"obius strip, then~$Z(\widetilde{G})$ is equal to the square of~$Z(G)$. This extends results for the square lattice
previously obtained by various authors.
\end{abstract}

\maketitle


\section{Introduction}
\label{sec:intro}

\subsection{Background}

A {\em dimer configuration\/}, or {\em perfect matching\/}, on a finite graph~$G$ is a family of edges, called {\em dimers\/}, such that each vertex of~$G$ is
covered by exactly one of these dimers. To each edge~$e$ of~$G$, assign an {\em energy\/}~$\mathcal{E}(e)\in\R$ and for a dimer configuration~$D$,
write~$\mathcal{E}(D)=\sum_{e\in D}\mathcal{E}(e)$. The associated {\em dimer partition function\/} is then defined as
\[
Z(G)=\sum_{D\in\D(G)}\exp(-\mathcal{E}(D))\,,
\]
the sum being over the set~$\D(G)$ of dimer configurations on~$G$. The corresponding Boltzmann measure is the probability measure on~$\D(G)$ given
by~$\mu_G(D)=\frac{1}{Z(G)}\exp(-\mathcal{E}(D))$. The study of this measure, called the {\em dimer model\/}, is a very active field of research in statistical physics:
we refer the reader to~\cite{Ke2} for an introduction to this model, and to~\cite{Ke1} for a survey of recent results on the subject.
Note that the theory is empty unless the graph has an even number of vertices, which we will always assume.

While this model is defined on any abstract weighted graph~$G$, exact results are only available when this graph is embedded in the plane, or more
generally in a surface. In particular, computing the dimer partition function was shown to be~$\#P$-complete in general~\cite{Val}, but it can be achieved in
polynomial time if~$G$ is planar~\cite{Ka1}, or more generally, if it is embedded in a surface of fixed genus~\cite{Tes,G-L,C-R}.
To be more precise, it was first shown by Kasteleyn~\cite{Ka1} and Temperley-Fisher~\cite{T-F} that the dimer partition function for the~$m\times n$ square lattice~$G_{m,n}$
with free boundary conditions (i.e. embedded in the plane) is equal to the Pfaffian of the associated weighted skew-adjacency matrix defined with respect to a well-chosen orientation
of the edges of the lattice. Using properties of the Pfaffian, and assuming that all horizontal (resp. vertical) edges have the same weight~$x$ (resp.~$y$), these authors were
able to give a closed formula for~$Z(G_{m,n})=:Z^{\mathit{free}}_{m,n}(x,y)$. This result was then extended in two directions. On the one hand, Kasteleyn~\cite{Ka3} showed that
this {\em Pfaffian method\/}, i.e. the computation of the dimer partition function using Pfaffians,
is valid for any planar weighted graph. (This will be reviewed in Sections~\ref{sub:pf} and~\ref{sub:plane} below). On the other hand, the exact value for the partition function on the
square lattice (of width~$m$, length~$n$, and weights~$x,y$) was computed for various boundary conditions, or in other words, for embeddings in various surfaces. This was performed by Kasteleyn~\cite{Ka1} for periodic-periodic boundary conditions (which correspond to an embedding in the torus~$\mathbb{T}$), by McCoy-Wu~\cite{M-W}
for free-periodic ones (embedding in the cylinder~$\C$), by Brankov-Priezzhev~\cite{B-P} and Tesler~\cite{Tes} for free-antiperiodic ones (embedding in the M\"obius strip~$\M$)
and by Lu-Wu~\cite{L-W} for periodic-antiperiodic boundary conditions (embedding in the Klein bottle~$\K$). These boundary conditions (and surfaces) are best described using
a rectangle with identifications of pairs of sides encoded by arrows, as illustrated in Figure~\ref{fig:rectangles}. We shall use the
notations~$Z^{\mathbb{T}}_{m,n}(x,y)$,~$Z^\C_{m,n}(x,y)$,~$Z^{\M}_{m,n}(x,y)$ and~$Z^{\K}_{m,n}(x,y)$ for the corresponding partition functions. 

\begin{figure}[t]
\labellist\small\hair 2.5pt
\pinlabel {$\K$} at 930 230
\pinlabel {$\mathbb{T}$} at 90 230
\pinlabel {$\C$} at 355 230
\pinlabel {$\M$} at 640 230
\endlabellist
\centering
\includegraphics[width=0.9\textwidth]{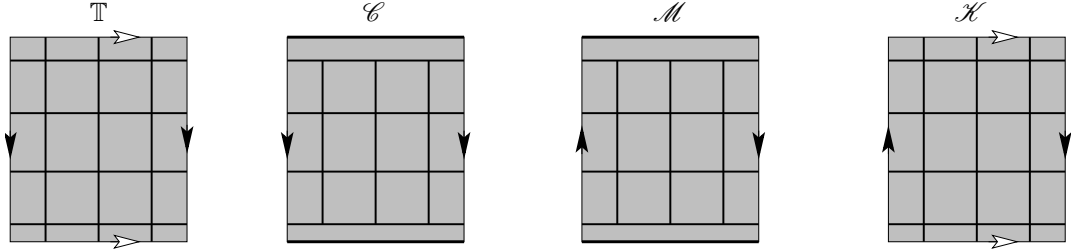}
\caption{The~$4\times 3$-square lattice embedded in the torus~$\mathbb{T}$, the cylinder~$\C$, the M\"obius strip~$\M$, and the Klein bottle~$\K$.}
\label{fig:rectangles}
\end{figure}

Using these results, Brankov-Priezzhev~\cite{B-P} proposed the identity
\begin{equation}
\label{equ:Mob1}
Z^\C_{2m,4n}(x,y)=\left(Z^\M_{2m,2n}(x,y)\right)^2\,,
\end{equation}
which they established in some large~$m,n$ expansion. This equality was later proved by Lu-Wu~\cite{L-W}, who also discovered the identity
\begin{equation}
\label{equ:Mob2}
Z^\C_{2m-1,4n}(x,y)=\frac{1}{2}\left(Z^\M_{2m-1,2n}(x,y)\right)^2\,.
\end{equation}
Finally Izmailian-Oganesyan-Hu~\cite{IOH} established the formula
\begin{equation}
\label{equ:K}
Z^\mathbb{T}_{2m-1,4n}(x,y)=\frac{1}{2}\left(Z^\K_{2m-1,2n}(x,y)\right)^2\,.
\end{equation}
Let us point out once again that these results were obtained by comparing explicit values of the various partition functions, values that are often available
only for the square lattice with weights~$x,y$. In particular, these authors regret the lack of ``general underlying principle'' explaining such ``curious identities''
(~\cite[p.650]{B-P} and~\cite[p.111]{L-W}, respectively).

In the present paper, we provide such a general principle that allows us, not only to explain these identities, but also to generalize them in quite a significant way.
We now summarize our results.

\subsection{Main results, and examples}

Given any non-orientable surface~$\Sigma$, there exists a~$2$-fold cover~$\widetilde{\Sigma}\to\Sigma$, called its {\em orientation cover\/}, which is determined by the
following property: a loop in~$\Sigma$ lifts to a loop in~$\widetilde{\Sigma}$ if and only if it admits a neighbourhood in~$\Sigma$ which is a cylinder. (Otherwise, it admits
a neighbourhood which is a M\"obius strip, and does not lift to a loop.) If~$\Sigma$ is the closed connected non-orientable surface of genus~$h\ge 1$ (the connected sum of~$h$
copies of the real projective plane), then its orientation cover is the orientable surface of genus~$g=h-1$ (the connected sum of~$g$ copies of the torus).
For example, removing a disc from the projective plane, one sees that the orientation cover of the M\"obius strip~$\M$ is the cylinder~$\C$ (see Figure~\ref{fig:coverM});
also, the torus~$\mathbb{T}$ is the orientation cover of the Klein bottle~$\K$ (see Figure~\ref{fig:coverK}). Obviously, any weighted graph~$G\subset\Sigma$ lifts to a weighted
graph~$\widetilde{G}\subset\widetilde{\Sigma}$, and the three identities displayed above all relate~$Z(G)$ to~$Z(\widetilde{G})$ for some specific examples of~$G\subset\Sigma$.

\begin{figure}[t]
\labellist\small\hair 2.5pt
\pinlabel {$=$} at 296 180
\pinlabel {$=$} at 1053 180
\pinlabel {$\to$} at 755 180
\pinlabel {$\C$} at 507 307
\pinlabel {$\M$} at 907 307
\endlabellist
\centering
\includegraphics[width=\textwidth]{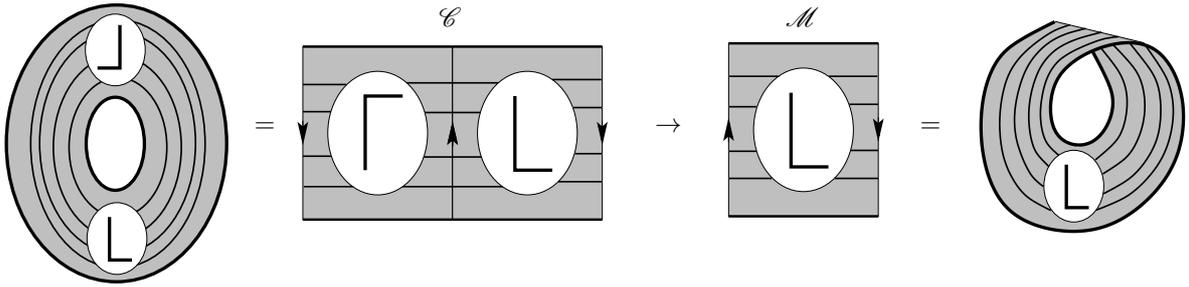}
\caption{A schematic description of the orientation cover of the M\"obius strip by the cylinder.}
\label{fig:coverM}
\end{figure}

Using the geometric approach to the Pfaffian method developped by the first-named author in~\cite{Cim}, we show that for any weighted graph~$G\subset\Sigma$ with an even number
of vertices, there are~$2^g=2^{h-1}$ distinct identities relating ``twisted versions'' of the dimer partition functions for~$G$ and~$\widetilde{G}$.
More precisely, these identities relate expressions of the form~$\sum_{D}\pm\exp(-\mathcal{E}(D))$, where the contribution of each dimer configuration~$D$ comes with a sign.
The precise definition of this sign, and therefore the precise statement of these general identities, use the terminology of {\em quadratic enhancements on surfaces\/}~\cite{K-T}.
Therefore, we shall not formulate them in a precise way here but refer the reader to Theorem~\ref{thm:main} below. However, let us briefly explain how they come about.
In a nutshell, there are essentially~$2^{h-1}$ orientations that can be used in the
Pfaffian method to study the dimer model on~$G$ embedded in a non-orientable surface~$\Sigma$ of genus~$h$. To any such orientation~$K$, one can associate an
orientation~$\widetilde{K}$ on the corresponding cover~$\widetilde{G}\subset\widetilde{\Sigma}$ which can be used to study the dimer model on~$\widetilde{G}$.
The two associated skew-adjacency matrices~$A^K(G)$ and~$A^{\widetilde{K}}(\widetilde{G})$ turn out to be related in a very simple way, which allows us to show that
their Pfaffians satisfy the equality~$|\Pf(A^{\widetilde{K}}(\widetilde{G}))|=|\Pf(A^{K}(G))|^2$ (see Lemma~\ref{lemma:Pf} below). As proved in~\cite{Cim}, these
Pfaffians are equal to twisted partition functions, so we obtain the~$2^{h-1}$ identities.

Let us consider the case of genus~$h=1$, i.e. of a graph~$G$ embedded in the M\"obius strip. In this case, we obtain a single identity, namely
\begin{equation}
\label{equ:Mob'}
Z(\widetilde{G})=Z_{0}(G)^2+Z_{1}(G)^2\,,
\end{equation}
where~$Z_{\alpha}(G)$ denotes the partial partition function given by~$Z_{\alpha}(G)=\sum_{[D\Delta D_0]=\alpha}\exp(-\mathcal{E}(D))$, the sum being over
all dimer configurations whose symmetric difference with a fixed~$D_0\in\D(G)$ winds around~$\M$ an even (resp. odd) number of times if~$\alpha=0$ (resp. if $\alpha=1$).
Of course, we are interested in identities between the actual partition functions. For this purpose, observe that Equation~(\ref{equ:Mob'})
implies that the identity~$Z(\widetilde{G})=Z(G)^2$ (resp.~$Z(\widetilde{G})=\frac{1}{2}Z(G)^2$) holds if and only if~$Z_{1}(G)=0$ (resp.~$Z_{0}(G)=Z_{1}(G)$). Therefore,
we are left with the task of finding natural conditions on~$G\subset\M$ for this to happen.

Recall that a graph embedded in a surface is {\em locally bipartite\/} (resp. {\em bipartite\/}) if the boundary of each face
(resp. if each cycle) has even length. The proof of the following theorem, which can be found in subsection~\ref{sub:proof1}, follows easily from the discussion above.

\begin{theorem}
\label{thm:1}
Let~$G\subset\M$ be a weighted graph with an even number of vertices embedded in the M\"obius strip, and let~$\widetilde{G}\subset\C$ denote its orientation cover.
\begin{enumerate}[(i)]
\item{If~$G$ is locally bipartite but not bipartite, then~$Z(\widetilde{G})=Z(G)^2$.}
\item{Assume that~$G\subset\M$ is invariant by a horizontal tranlation~$\tau$ of the M\"obius strip. If there is some~$D_0\in\D(G)$ such that
the symmetric difference~$D_0\Delta\tau(D_0)$ winds around~$\M$ an odd number of time, then~$Z(\widetilde{G})=\frac{1}{2}Z(G)^2$.}
\end{enumerate}
\end{theorem}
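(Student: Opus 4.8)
The plan is to derive Theorem~\ref{thm:1} as a direct consequence of the genus-one identity~\eqref{equ:Mob'}, namely $Z(\widetilde{G})=Z_0(G)^2+Z_1(G)^2$, together with the observation recorded just before the statement: $Z(\widetilde{G})=Z(G)^2$ holds exactly when $Z_1(G)=0$, and $Z(\widetilde{G})=\frac{1}{2}Z(G)^2$ holds exactly when $Z_0(G)=Z_1(G)$ (using $Z(G)=Z_0(G)+Z_1(G)$ and $a^2+b^2=\tfrac12(a+b)^2\iff a=b$). So the entire task reduces to verifying the relevant vanishing/equality of the partial partition functions $Z_\alpha(G)$ under the hypotheses of (i) and (ii).

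For part (i), I would show that when $G\subset\M$ is locally bipartite but not bipartite, every dimer configuration $D$ has $[D\Delta D_0]=0$, i.e.\ $\D(G)=\D_0(G)$, forcing $Z_1(G)=0$. The key point is that the class $[D\Delta D_0]\in H_1(\M;\Z/2)\cong\Z/2$ is well defined independently of the choice of $D_0\in\D(G)$ only up to the ambiguity that is killed once one fixes $D_0$; more importantly, $D\Delta D_0$ is a disjoint union of cycles in $G$, and its homology class is a sum of classes of cycles of $G$. Since $G$ is bipartite if and only if every cycle has even length, and the homology class in $H_1(\M;\Z/2)$ of an \emph{even} cycle embedded in a surface is constrained: in the M\"obius strip, the nontrivial class is represented only by the core curve, which is odd (it bounds a M\"obius band, not a cylinder). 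I would make this precise using the local bipartiteness: 2-color the faces-adjacency data to get a well-defined $\Z/2$-valued function, or more cleanly, note that local bipartiteness means $G$ lifts the face-parity structure so that every \emph{contractible} cycle (= boundary of a union of faces) is even, and the only obstruction to global bipartiteness is the parity of a noncontractible cycle; in $\M$ that cycle is the unique nontrivial homology class. Thus ``locally bipartite but not bipartite'' says precisely that the nontrivial homology class is represented by odd cycles only, whence a symmetric difference of two \emph{dimer configurations} (which is a disjoint union of cycles whose total length is even, being $|D|+|D_0|$ minus twice the overlap, hmm — actually its total number of edges need not be even) — here the cleaner argument is that $D\Delta D_0$ even as a mod-2 cycle must lie in the image of even cycles, and I expect this is exactly where local bipartiteness is used to rule out the odd generator. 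This homological bookkeeping is the main obstacle: making rigorous the claim that local-but-not-global bipartiteness of $G\subset\M$ forces every element of $\D(G)$ to lie in the trivial winding class.

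For part (ii), the hypothesis gives a horizontal translation $\tau$ preserving $G$ (hence acting on $\D(G)$) and a configuration $D_0$ with $[D_0\Delta\tau(D_0)]=1$. I would argue that $\tau$ then swaps the two classes: for any $D$, $[\tau(D)\Delta D_0]=[\tau(D)\Delta\tau(D_0)]+[\tau(D_0)\Delta D_0]=\tau_*[D\Delta D_0]+1=[D\Delta D_0]+1$, using that $\tau$ is isotopic to the identity on $\M$ (a horizontal translation), so $\tau_*=\mathrm{id}$ on $H_1(\M;\Z/2)$. Hence $D\mapsto\tau(D)$ is a bijection $\D_0(G)\to\D_1(G)$, and since $\tau$ preserves edge energies (it is a symmetry of the weighted graph), it preserves $\exp(-\mathcal{E}(\cdot))$, giving $Z_0(G)=Z_1(G)$ and therefore $Z(\widetilde{G})=\frac12 Z(G)^2$. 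The only subtlety is confirming that a horizontal translation of $\M$ is indeed isotopic to the identity and that it is a weight-preserving automorphism — both of which are built into the statement's phrasing (``invariant by a horizontal translation $\tau$'').

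In summary, after invoking~\eqref{equ:Mob'}, part~(i) is a statement in $\Z/2$-homology of the M\"obius strip about cycles in $G$, and part~(ii) is an elementary symmetry argument; I would write part~(ii) in full and devote most of the remaining space to the homological lemma underlying part~(i), which is where the real content lies.
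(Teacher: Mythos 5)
Your overall route is exactly the paper's: invoke the genus-one identity~(\ref{equ:Mob'}) (i.e.\ Proposition~\ref{prop:Mob}) and reduce everything to showing $Z_{1,D_0}(G)=0$ in case (i) and $Z_{0,D_0}(G)=Z_{1,D_0}(G)$ in case (ii). Your part (ii) is correct and coincides with the paper's argument: the paper applies $\tau^{-1}$ rather than $\tau$, but the computation rests on the same additivity $[A\Delta C]=[A\Delta B]+[B\Delta C]$ of mod-$2$ homology classes under symmetric difference, plus the fact that $\tau_*=\id$ on $H_1(\M;\Z_2)$ (which here is automatic, since every automorphism of $\Z_2$ is the identity, so you do not even need the isotopy to the identity).

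Part (i), however, is left with a genuine gap precisely at the point you flag as ``the main obstacle'', and the detour you start down --- the parity of the \emph{total} number of edges of $D\Delta D_0$ --- is the wrong quantity to examine. The missing ingredient is elementary and is already recorded in subsection~\ref{sub:pf}: the symmetric difference $D\Delta D_0$ of two dimer configurations is a disjoint union of simple cycles $C_1,\dots,C_m$, and since $D$ and $D_0$ alternate along each of them, \emph{each individual cycle} $C_j$ has even length. On the other hand, local bipartiteness makes the length modulo~$2$ vanish on every face boundary, hence on all boundaries, so it descends to a linear map $\ell\colon H_1(\M;\Z_2)\cong\Z_2\to\Z_2$; non-bipartiteness forces $\ell\neq 0$, i.e.\ $\ell(1)=1$, which says that \emph{every} simple closed curve in $G$ that does not bound in $\M$ has odd length. (Your claim that the nontrivial class ``is represented only by the core curve'' is not right --- it is represented by every non-bounding simple closed curve in $\M$; what matters is only that local-but-not-global bipartiteness makes all of them odd.) Consequently no $C_j$ can be non-bounding, so $[D\Delta D_0]=\sum_j[C_j]=0$ for every $D\in\D(G)$, i.e.\ $Z_{1,D_0}(G)=0$. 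With this two-line lemma inserted, your proof of (i) becomes the paper's.
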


Examples are plentiful and include many classical regular lattices. In the context of this discussion, we shall only mention the following ones.

\begin{example}
\label{ex:Mob1}
Consider the case of the~$2m\times 2n$ square lattice~$G$ embedded in the M\"obius strip, endowed with arbitrary weights, and let~$\widetilde{G}$ be the
corresponding~$2m\times 4n$ square lattice in the cylinder. Since~$G$ is locally bipartite but not bipartite, we have the equality~$Z(\widetilde{G})=Z(G)^2$ which
extends Equation~(\ref{equ:Mob1}) to arbitrary weights.
\end{example}

\begin{example}
\label{ex:hexa}
Consider a portion of arbitrary size of the hexagonal lattice, embedded in the M\"obius strip as illustrated in the left-hand side of Figure~\ref{fig:ex},
and endowed with arbitrary weights. This graph~$G$ is easily seen not to be bipartite, so we have the equality~$Z(\widetilde{G})=Z(G)^2$.
\end{example}

\begin{example}
\label{ex:sqoct}
Let~$G$ be a portion of arbitrary size of the square-octogon lattice embedded in the M\"obius strip as illustrated in the center of Figure~\ref{fig:ex}.
This graph is locally bipartite but not bipartite, so the equality~$Z(\widetilde{G})=Z(G)^2$ holds for any weights on~$G$.
\end{example}

\begin{example}
\label{ex:Mob2}
Consider the~$(2m-1)\times 2n$ square lattice~$G$ embedded in the M\"obius strip, and endowed with any weight system that is invariant under horizontal translation by one edge.
Choosing for~$D_0$ any dimer configuration made of horizontal edges (this is possible since the lattice has even length), we find that~$D_0\Delta\tau(D_0)$
consists of all horizontal edges of~$G$, and hence winds around the M\"obius strip~$2m-1$ times. Therefore, we have the equality~$Z(\widetilde{G})=\frac{1}{2}Z(G)^2$ which
extends Equation~(\ref{equ:Mob2}).

This example can be generalized as follows: consider planar weighted graph~$G_1,\dots,G_{m-1}$ that admit a dimer configuration, and let~$\overline{G}_1,\dots,\overline{G}_{m-1}$
denote the weighted graphs obtained from these planar graphs via a horizontal reflexion. In each column of the square lattice, add a copy of~$G_1$ in the first face, a copy of~$G_2$
in the second, and so on until a copy of~$G_{m-1}$ in the~$(m-1)^\text{th}$ face, and then a copy of~$\overline{G}_{m-1}$, of~$\overline{G}_{m-2}$, until a copy of~$\overline{G}_1$
in the last face. Finally, join each of these graphs to the adjacent vertices of~$G$ in an arbitrary but fixed way for each~$G_i$. (The case~$m=2$ is illustrated in the right-hand side of Figure~\ref{fig:ex}.) By the second part of Theorem~\ref{thm:1}, the resulting graph~$G'$ will also satisfy the equality~$Z(\widetilde{G}')=\frac{1}{2}Z(G')^2$.
Note that this class of graphs contains the triangular lattice as a particularly natural example.

\end{example}

\begin{figure}[t]
\centering
\includegraphics[width=\textwidth]{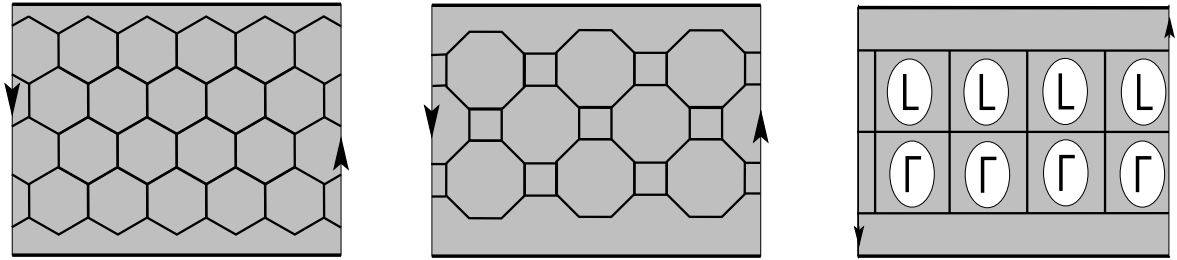}
\caption{The graphs in Examples~\ref{ex:hexa},~\ref{ex:sqoct} and~\ref{ex:Mob2}.}
\label{fig:ex}
\end{figure}

Let us now consider the case of genus~$h=2$, i.e. of a graph~$G$ embedded in the Klein bottle. In this case, we obtain two identities,
see Equation~(\ref{equ:h=2}) below. They easily lead to the following result, whose proof can be found in subsection~\ref{sub:general}.

\begin{theorem}
\label{thm:2}
Let~$G\subset\K$ be a weighted graph with an even number of vertices embedded in the Klein bottle, and let~$\widetilde{G}\subset\mathbb{T}$ denote its orientation cover.
Let us assume that~$G$ is locally bipartite but not bipartite, and invariant by a horizontal tranlation~$\tau$.
If there is some~$D_0\in\D(G)$ such that the symmetric difference~$D_0\Delta\tau(D_0)$ winds an odd number of times around~$\K$ in the horizontal direction (and an arbitrary
number of times in the vertical direction), then~$Z(\widetilde{G})=\frac{1}{2}Z(G)^2$.
\end{theorem}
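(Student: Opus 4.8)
The plan is to deduce Theorem~\ref{thm:2} from the two identities of Equation~(\ref{equ:h=2}) by feeding in just two elementary facts about $G$: local bipartiteness will kill two of the four partial partition functions, and the translation symmetry will equate the remaining two. All the genuine content is thus already contained in Equation~(\ref{equ:h=2}), hence in Theorem~\ref{thm:main} and Lemma~\ref{lemma:Pf}; what is left is bookkeeping inside $H_1(\K;\Z/2\Z)\cong(\Z/2\Z)^2$, running parallel to the proof of Theorem~\ref{thm:1}.

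First I would fix $D_0\in\D(G)$ and introduce, as in the genus~$1$ case, the partial partition functions $Z_\alpha(G)=\sum_{[D\Delta D_0]=\alpha}\exp(-\mathcal{E}(D))$ for $\alpha\in H_1(\K;\Z/2\Z)$, so that $Z(G)=\sum_\alpha Z_\alpha(G)$ and Equation~(\ref{equ:h=2}) is an identity in the $Z_\alpha(G)$. Because $G$ is locally bipartite, assigning to a cycle its length mod~$2$ is well defined on homology and yields a homomorphism $\ell\colon H_1(\K;\Z/2\Z)\to\Z/2\Z$; because $G$ is not bipartite, $\ell$ is onto, so $\ker\ell=\{0,\gamma\}$ for a unique $\gamma\neq 0$. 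The key remark is that $D\Delta D_0$ is a disjoint union of cycles, each of even length since it alternates between edges of $D$ and edges of $D_0$; hence $[D\Delta D_0]\in\ker\ell$ for every $D\in\D(G)$, so $Z_\alpha(G)=0$ whenever $\alpha\notin\ker\ell$. Substituting these vanishings into Equation~(\ref{equ:h=2}) should collapse its right-hand side to $Z(\widetilde G)=Z_0(G)^2+Z_\gamma(G)^2$, the exact analogue of Equation~(\ref{equ:Mob'}), while $Z(G)=Z_0(G)+Z_\gamma(G)$.

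Next I would use the symmetry. As $G$ is $\tau$-invariant, the map $D\mapsto\tau(D)$ is an energy-preserving bijection of $\D(G)$; and since $\tau$ acts on the cycles of $G$ without changing their lengths, the induced automorphism $\tau_*$ of $H_1(\K;\Z/2\Z)$ satisfies $\ell\circ\tau_*=\ell$, hence preserves $\ker\ell=\{0,\gamma\}$ and fixes $\gamma$. Writing $\beta=[\tau(D_0)\Delta D_0]$, one has for all $D\in\D(G)$
\[
[\tau(D)\Delta D_0]=[\tau(D)\Delta\tau(D_0)]+[\tau(D_0)\Delta D_0]=\tau_*[D\Delta D_0]+\beta=[D\Delta D_0]+\beta,
\]
the last equality because $[D\Delta D_0]\in\ker\ell$ is fixed by $\tau_*$. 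Now $\beta\in\ker\ell$ (it is again a union of even cycles) and $\beta\neq 0$ because by hypothesis $\beta$ winds an odd number of times around $\K$ in the horizontal direction; hence $\beta=\gamma$. Therefore $D\mapsto\tau(D)$ restricts to an energy-preserving bijection between the configurations with $[D\Delta D_0]=0$ and those with $[D\Delta D_0]=\gamma$, so $Z_0(G)=Z_\gamma(G)$. Combining the two steps, $Z(G)=2Z_0(G)$ and $Z(\widetilde G)=Z_0(G)^2+Z_\gamma(G)^2=2Z_0(G)^2=\tfrac12Z(G)^2$.

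The argument has no deep step once Equation~(\ref{equ:h=2}) is in hand; the two points I would be most careful about are, first, that $\tau_*$ fixes the class $\gamma$ — unlike the M\"obius case, where $H_1(\M;\Z/2\Z)=\Z/2\Z$ forces $\tau_*=\mathit{id}$ for free, here one must really invoke that $\tau$ preserves edge-counts of cycles (equivalently, that $\tau$ is isotopic to the identity of $\K$) — and, second, the precise shape of Equation~(\ref{equ:h=2}) after the vanishings $Z_\alpha(G)=0$: one should confirm that the relevant one of the two identities reduces to exactly $Z(\widetilde G)=Z_0(G)^2+Z_\gamma(G)^2$ rather than to a variant carrying signs, although the numerology $\tfrac12Z(G)^2=\tfrac12(2Z_0(G))^2=2Z_0(G)^2$ leaves little room for anything else.
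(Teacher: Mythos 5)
Your overall strategy coincides with the paper's: local bipartiteness yields a length homomorphism $\ell\colon H_1(\K;\Z_2)\to\Z_2$ which kills every $Z_\alpha(G)$ with $\ell(\alpha)=1$; the hypothesis on $D_0\Delta\tau(D_0)$ identifies the surviving non-trivial class $\gamma$; and the translation $\tau$ gives a weight-preserving bijection exchanging the configurations in classes $0$ and $\gamma$, whence $Z_0(G)=Z_\gamma(G)$ and $Z(\widetilde{G})=2Z_0(G)^2=\frac{1}{2}Z(G)^2$. That part is correct, and your uniform treatment via $\ker\ell=\{0,\gamma\}$ even sidesteps the paper's case distinction between $[D_0\Delta\tau(D_0)]=\beta_2$ and $[D_0\Delta\tau(D_0)]=\beta_1+\beta_2$.

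There is, however, one genuine gap. You substitute the vanishings $Z_\alpha(G)=0$ into Equation~(\ref{equ:h=2}) and declare that the identity ``collapses'' to $Z(\widetilde{G})=Z_0(G)^2+Z_\gamma(G)^2$, but those vanishings only affect the right-hand side: the left-hand side is the twisted sum $|Z_{00}(\widetilde{G})+Z_{01}(\widetilde{G})+\e Z_{10}(\widetilde{G})-\e Z_{11}(\widetilde{G})|$, which is not $Z(\widetilde{G})$ unless one separately shows that the terms carrying the sign $\e$ vanish. You do sense that ``the precise shape'' of the reduced identity needs confirming, but you locate the worry in the signs on the right-hand side and then dismiss it by a numerological argument that presupposes the conclusion, which is circular. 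The missing step --- supplied explicitly in the paper --- is to note that $\widetilde{G}\subset\mathbb{T}$ is itself locally bipartite and that its length homomorphism is determined by $\ell$: a lift $\widetilde\beta'_1$ of $\beta_1$ has the same (odd) length as $\beta_1$, while $p^{-1}(\beta_2)$ is a connected double cover of $\beta_2$ and hence has even length; consequently $Z_{10}(\widetilde{G})=Z_{11}(\widetilde{G})=0$ and the left-hand side does equal $Z(\widetilde{G})$. The repair uses exactly the tool you already deployed on the Klein-bottle side, so the gap is easily filled, but as written the proof does not close.
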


\begin{example}
Consider the~$(2m-1)\times 2n$ square lattice embedded in the Klein bottle, and endowed with any weight system that is invariant under translation by one edge
along the horizontal direction in~$\K$. By the same arguments as in Examples~\ref{ex:Mob1} and~\ref{ex:Mob2}, one easily checks that it satisfies the hypothesis of Theorem~\ref{thm:2},
thus extending Equation~(\ref{equ:K}) to more general weights. (This can be further generalized as described at the end of Example~\ref{ex:Mob2}.)
\end{example}

\subsection{Consequences}

Let us begin with the combinatorial consequences of our results. Simply observe that if the energy of each edge is zero, then the dimer partition function~$Z(G)$ is equal
to the number of perfect matchings on~$G$. Therefore, each of the statements above implies an identity between the number of perfect matchings on~$G$ and
on~$\widetilde{G}$. For example, Theorem~\ref{thm:1} implies the following combinatorial statement:
{\em If~$G$ is a non-bipartite graph with an even number of vertices embedded in~$\M$ in a locally bipartite way, then the number of perfect matchings on its orientation
cover~$\widetilde{G}\subset\C$ is equal to the square of the number of perfect matchings on~$G$.\/}
To the best of our knowledge, such a statement cannot be proved by purely combinatorial means. (See~\cite{L-P} for the combinatorial theory of perfect matchings.)

We now come to the physical consequences of our results. Consider a two-dimensional system at criticality, with area~$A$ and boundary length~$L$. The total free
energy~$\log Z$ of such a system is commonly assumed~\cite{P-F} to have a large~$A$ expansion at fixed shape of the form
\[
F=Af_b + Lf_s +D+o(1)\,,
\]
where~$f_b$ is the bulk free energy (per unit surface area), which is independent of the boundary conditions,~$f_s$ is the surface free energy (per unit boundary length),
and~$D$ is a dimensionless coefficient. Furthermore, the coefficients~$f_b$ and~$f_s$ depend on the lattice, but~$D$ is expected to be universal, depending only on the shape
of the system, and possibly on the boundary conditions. (Analogies with conformal field theory lead several authors to conjecture the explicit form of this coefficient~$D$, see
e.g.~\cite{CMC} and references therein.)
As pointed out by Cardy-Peschel~\cite{C-P}, such an expansion is actually valid only for surfaces with vanishing
Euler characteristic, that is, precisely the four surfaces~$\M,\C,\K$ and~$\mathbb{T}$ considered above.

Applying the first part of Theorem~\ref{thm:1}, we find that such an expansion holds for a system on a locally bipartite non-bipartite graph~$G\subset\M$ if and only if it holds
for its orientation cover~$\widetilde{G}\subset\C$; in such a case both expansions have the same coefficients~$f_b$ and~$f_s$ (observe that the area and the boundary length
are doubled in~$\C$), while the corresponding universal coefficients are simply related by~$\widetilde{D}=2D$. (This was the precise result obtained by
Brankov-Priezzhev~\cite{B-P} for the square lattice with~$x,y$ weights.) Also, such an expansion holds for a graph~$G$ satisfying
the assumptions of the second part of Theorem~\ref{thm:1}, or of Theorem~\ref{thm:2}, if and only if it holds for its orientation cover~$\widetilde{G}$; in such a case both expansions have the same bulk free energy, as they should, the same surface free energy (which vanishes in the setting of Theorem~\ref{thm:2}), while the universal coefficients
are related by~$\widetilde{D}=2D-\log(2)$.

\subsection{Discussion, and organisation of the paper}

As is well-known since the work of Kasteleyn~\cite{Ka3}, the validity of the Pfaffian method for planar graphs can be demonstrated in a straightforward way.
However, the extension of this method to graphs embedded in non-planar surfaces requires either tedious combinatorial considerations~\cite{Tes,G-L}, or
additional geometric tools known as spin or pin$^-$ structures~\cite{C-R,Cim}.
It would have been possible to write a slightly shorter article relying heavily on~\cite{Cim} as a {\em black box\/} -- and the most general form of our identities actually still does.
However, in the course of our investigations, we discovered that the special geometry of simple closed curves
in the M\"obius strip allows an elementary proof of the Pfaffian method in this case as well. It thus became possible to give a simple self-contained proof of the identities
in the genus~$h=1$ case, leading in particular to an elementary demonstration of Theorem~\ref{thm:1}. Since such an elementary proof of the Pfaffian method for the M\"obius strip does not seem to
be known, we decided to include it in our paper, which is therefore organised in the following slightly unorthodox way: Section~\ref{sec:Mob} deals with the special case of the
M\"obius band, is entirely self-contained and meant to be readable without any knowledge of algebraic topology, while Section~\ref{sec:gen} deals with the general case,
relying heavily on previous work~\cite{Cim}.

To be more precise, we start in subsection~\ref{sub:pf} by recalling the well-known relation between dimers and
Pfaffians, leading to the notion of a Pfaffian orientation. In subsection~\ref{sub:plane}, we give a very short proof of Kasteleyn's theorem in the plane (Theorem~\ref{thm:Kast}),
and explain what changes for the cylinder. Subsection~\ref{sub:Mob} contains the aforementioned proof of the corresponding statement
in the M\"obius strip (Theorem~\ref{thm:Mob}), while subsection~\ref{sub:proof1} contains the proof of Equation~(\ref{equ:Mob'}), and of Theorem~\ref{thm:1}.
We then start Section~\ref{sec:gen} with a review of the notions of homology and of quadratic enhancements in subsection~\ref{sub:hom}. Finally, subsection~\ref{sub:general}
contains our main result (Theorem~\ref{thm:main}), its proof, and the proof of Theorem~\ref{thm:2}.

\subsection*{Acknowledgments}
This work was supported by a grant of the Swiss National Science Foundation.


\section{The identities for the cylinder and M\"obius strip}
\label{sec:Mob}

The aim of this section is to give a self-contained treatement of the Pfaffian method (subsection~\ref{sub:pf}) for graphs in the cylinder (subsection~\ref{sub:plane}) and in the
M\"obius strip (subsection~\ref{sub:Mob}), leading to the proof of Theorem~\ref{thm:1} (subsection~\ref{sub:proof1}).

Throughout this section,~$G$ is a finite graph with an even number of vertices, and each edge~$e$ of~$G$ is endowed with a real number~$\mathcal{E}(e)$. To avoid
cumbersome notation, we shall write~$\nu(e)=\exp(-\mathcal{E}(e))$, and~$\nu(D)=\prod_{e\in D}\nu(e)$ for any family~$D$ of edges of~$G$. Note that the weights~$\nu(e)$
can be considered as formal variables, as all of our identities hold true in the ring~$\Z[i][\{\nu(e)\}_e]$.

\subsection{Dimers and Pfaffians}
\label{sub:pf}
We begin by explaining how Pfaffians are related to dimer partition functions, as first discovered by Kasteleyn~\cite{Ka1} and Temperley-Fisher~\cite{T-F}.

Recall that the determinant of a skew-symmetric matrix~$A=(a_{ij})$ of size~$2n$ is the square of a polynomial in the~$a_{ij}$'s.
This square root, called the {\em Pfaffian\/} of~$A$, is given by
\[
\Pf(A)=\frac{1}{2^n n!}\sum_{\sigma\in S_{2n}}\sign(\sigma) a_{\sigma(1)\sigma(2)}\cdots a_{\sigma(2n-1)\sigma(2n)},
\]
where the sum is over all permutations of~$\{1,\dots,2n\}$ and~$\sign(\sigma)=\pm 1$ denotes the signature of~$\sigma$. Since~$A$ is skew-symmetric, the monomial
corresponding to~$\sigma$ only depends on the matching of~$\{1,\dots,2n\}$ into unordered pairs~$\{\sigma(1),\sigma(2)\},\dots,\{\sigma(2n-1),\sigma(2n)\}$.
As there are~$2^n n!$ different permutations defining the same matching, we get
\[
\Pf(A)=\sum_{[\sigma]}\sign(\sigma) a_{\sigma(1)\sigma(2)}\cdots a_{\sigma(2n-1)\sigma(2n)},
\]
where the sum is on the set of matchings of~$\{1,\dots,2n\}$. Note that there exists a skew-symmetric version of the Gauss elimination algorithm, which
allows us to compute the Pfaffian of a matrix of size~$2n$ in~$O(n^3)$ time.

Pfaffians can be used to compute the dimer partition function of a weighted graph~$(G,\nu)$, as follows. Order totally the~$2n$ vertices of~$G$ and fix an arbitrary
orientation~$K$ of the edges of~$G$. Let~$A^K(G)=(a_{uv})$ denote the associated weighted skew-adjacency matrix, i.e. the skew-symmetric matrix whose rows and columns
are indexed by the vertices of~$G$, and whose coefficients are given by
\begin{equation}
\label{equ:A}
a_{uv}=\sum_{e=(u,v)}\e_{uv}^K(e)\nu(e),
\end{equation}
where the sum is over all edges~$e$ in~$G$ between the vertices~$u$ and~$v$, and
\[
\e^K_{uv}(e)=
\begin{cases}
\phantom{-}1 & \text{if~$e$ is oriented by~$K$ from~$u$ to~$v$;} \\
-1 & \text{otherwise}\,.
\end{cases}
\]
Now, let us consider the Pfaffian of this matrix. A matching of the vertices of~$G$ contributes to~$\Pf(A^K(G))$ if and only if it is realized by a dimer configuration
on~$G$, and this contribution is~$\pm\nu(D)$. More precisely,
\begin{equation}
\label{equ:Pf}
\Pf(A^K(G))=\sum_{D\in\D(G)}\e^K(D)\nu(D)\,,
\end{equation}
where the sign~$\e^K(D)$ can be computed as follows: if the dimer configuration~$D$ is given by edges~$e_1,\dots,e_n$ matching vertices~$u_\ell$ and~$v_\ell$ for~$\ell=1,\dots,n$,
let~$\sigma$ denote the permutation mapping the totally ordered set of vertices of~$G$ to~$(u_1,v_1,\dots,u_n,v_n)$; the sign is equal to
\[
\e^K(D)=\sign(\sigma)\prod_{\ell=1}^n\e^K_{u_\ell v_\ell}(e_\ell)\,.
\]
The problem of expressing~$Z(G)$ as a Pfaffian now boils down to finding an orientation~$K$ of the edges of~$G$ such that~$\e^K(D)$ does not depend on~$D$.

Let us therefore fix~$D,D'\in\D(G)$ and try to compute the product~$\e^K(D)\e^K(D')$. Note that the symmetric difference~$D\Delta D'$ consists of a disjoint union of simple
closed curves~$C_1,\dots,C_m$ of even length in~$G$. Since the matchings~$D$ and~$D'$ alternate along these cycles, one can choose permutations~$\sigma$ (resp.~$\sigma'$)
representing~$D$ (resp.~$D'$) such that~$s=\sigma'\circ\sigma^{-1}$ is the product of the cyclic permutations defined by the cycles~$C_1,\dots,C_m$. Using this particular choice of representatives, and the fact that~$\sign(s)=(-1)^m$, we find
\begin{equation}
\label{equ:cc}
\e^K(D)\e^K(D')=\prod_{j=1}^m(-1)^{n^K(C_j)+1},
\end{equation}
where~$n^K(C_j)$ denotes the number of edges of~$C_j$ where a fixed orientation of~$C_j$ differs from~$K$.
(Since~$C_j$ has even length, the parity of this number is independent of the chosen orientation of~$C_j$.)

Fixing a reference matching~$D_0$, Equations~(\ref{equ:Pf}) and~(\ref{equ:cc}) lead to the equality
\begin{equation}
\label{equ:Pf'}
\e^K(D_0)\,\Pf(A^K(G))=\sum_{D\in\D(G)}(-1)^{\sum_i(n^K(C_i)+1)}\,\nu(D)\,,
\end{equation}
where~$D\Delta D_0=\bigsqcup_j C_j$. Therefore, we are now left with the problem of finding an orientation~$K$ of~$G$ with the following property: for any cycle~$C$ of even
length such that~$G\setminus C$ admits a dimer configuration,~$n^K(C)$ is odd. By Equation~(\ref{equ:Pf'}), if~$K$ is such a {\em Pfaffian orientation\/},
then~$Z(G)=|\Pf(A^K(G))|$, which can be computed in polynomial time.

\subsection{Graphs in the plane and in the cylinder}
\label{sub:plane}

Kasteleyn's celebrated theorem asserts that any planar graph admits a Pfaffian orientation, and we include a very short proof for the sake of completeness.
More precisely, let~$G$ be a graph embedded in the plane. We shall say that an orientation~$K$ on~$G$ is a {\em Kasteleyn orientation\/} on~$G\subset\R^2$ if,
for each face~$f$ of~$G\subset\R^2$,~$n^K(\partial f)$
is odd, where~$\partial f$ denotes the boundary of~$f$ oriented counterclockwise.

\begin{theorem}[Kasteleyn~\cite{Ka3}]
\label{thm:Kast}
Any planar graph admits a Kasteleyn orientation~$K$, and any such orientation is Pfaffian, so~$|\Pf(A^K(G))|=Z(G)$.
\end{theorem}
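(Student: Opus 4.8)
The plan is to prove the two assertions of Kasteleyn's theorem separately: first the \emph{existence} of a Kasteleyn orientation on any planar graph, and then the \emph{Pfaffian property} of any such orientation via the criterion isolated at the end of subsection~\ref{sub:pf}.

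For existence, I would argue by induction on the number of edges (or equivalently, build the orientation spanning-tree-first). Pick a spanning tree~$T$ of~$G$ and orient its edges arbitrarily. The faces of~$G\subset\R^2$, together with the outer face, satisfy Euler's formula; more to the point, the bounded faces can be ordered so that each edge of~$G\setminus T$, when added, borders exactly one ``new'' bounded face. Adding the edges of~$G\setminus T$ one at a time, at each step the newly enclosed face~$f$ has all but one of its boundary edges already oriented, and one is free to orient the last edge so that~$n^K(\partial f)$ is odd. After all edges are added, every bounded face has the desired parity; one then checks that the outer face is automatically correct, because summing the relation ``$n^K(\partial f)$ is odd'' over all bounded faces~$f$ counts each interior edge of~$G$ exactly twice (once in each direction of traversal, contributing~$0 \bmod 2$) and each boundary edge of the outer face once, so the parity of~$n^K(\partial f_\infty)$ is forced by the number of bounded faces together with a Euler-characteristic count. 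Alternatively, and perhaps more cleanly, I would invoke the standard fact that this ``odd face'' condition is a $\Z/2$-cocycle condition whose obstruction lives in $H^2(\R^2;\Z/2)=0$, but since the paper wants a short self-contained argument, the spanning-tree induction above is preferable.

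For the Pfaffian property, let~$K$ be any Kasteleyn orientation and let~$C$ be an even cycle of~$G\subset\R^2$ such that~$G\setminus C$ admits a dimer configuration; by Equation~(\ref{equ:Pf'}), it suffices to show~$n^K(C)$ is odd. The cycle~$C$ is simple and bounds a disc~$\Omega$ in the plane; let~$f_1,\dots,f_r$ be the faces of~$G$ lying inside~$\Omega$ and let~$V_{\mathrm{int}}$, $E_{\mathrm{int}}$ be the numbers of vertices and edges strictly inside~$\Omega$. Orienting each~$\partial f_k$ counterclockwise and summing modulo~$2$, the interior edges cancel in pairs while the edges of~$C$ each contribute once, which gives
\[
n^K(C)\equiv\sum_{k=1}^r n^K(\partial f_k)+E_{\mathrm{int}}\equiv r+E_{\mathrm{int}}\pmod 2,
\]
using that~$K$ is Kasteleyn. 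Now Euler's formula for the disc~$\Omega$ reads (number of vertices) $-$ (number of edges) $+$ (number of faces) $=1$, i.e. $(V_{\mathrm{int}}+|C|)-(E_{\mathrm{int}}+|C|)+r=1$, so $r+E_{\mathrm{int}}=V_{\mathrm{int}}+2E_{\mathrm{int}}+1\equiv V_{\mathrm{int}}+1\pmod 2$. Finally, since~$G\setminus C$ has a dimer configuration and~$C$ has even length, the vertices of~$G$ strictly inside~$\Omega$ must be matched among themselves (a dimer cannot cross~$C$), so~$V_{\mathrm{int}}$ is even; hence~$n^K(C)$ is odd, as required. Combined with Equation~(\ref{equ:Pf'}), this yields~$Z(G)=|\Pf(A^K(G))|$.

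The main obstacle is the bookkeeping in the parity/Euler-characteristic computations—getting the counting of interior versus boundary edges exactly right and making sure the ``$G\setminus C$ has a dimer configuration'' hypothesis is used at precisely the right place (namely to force~$V_{\mathrm{int}}$ even). Everything else is routine: the existence argument is a standard inductive construction, and the reduction to ``$n^K(C)$ odd for all relevant~$C$'' has already been carried out in subsection~\ref{sub:pf}. For the cylinder, as the excerpt notes, one remarks that the same proof goes through verbatim: a graph in the cylinder embeds in the annulus, every simple cycle bounding a subsurface of the annulus that could support the complement of a cycle in a dimer configuration still bounds a disc on one side (a cycle wrapping the cylinder disconnects it into two annuli, neither a disc, but such a cycle has the two boundary circles on opposite sides and one checks the complement of such a cycle never admits a dimer configuration when the relevant parity fails), so the face-sum argument applies unchanged; I would make this precise by the same Euler-characteristic count applied to whichever complementary region is planar.
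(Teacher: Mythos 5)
Your argument for the Pfaffian property is essentially the paper's own: both reduce, via Equation~(\ref{equ:Pf'}), to showing that~$n^K(C)$ is odd for every even cycle~$C$ such that~$G\setminus C$ is matchable, and both prove this by summing the face parities over the disc bounded by~$C$ and invoking its Euler characteristic, with the matchability hypothesis used exactly where you use it, namely to force~$V_{\mathrm{int}}$ to be even. Where you genuinely differ is the existence part: the paper starts from an arbitrary orientation and, for each face with the wrong parity, inverts~$K$ along a path transverse to~$G$ running to the outer face (each intermediate face meets the path in two edges, so its parity is unchanged, while the starting face meets it in one), whereas you build the orientation by a tree--cotree induction. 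Both are standard and correct; the paper's flip argument is shorter and adapts directly to the surface setting used later in the paper, while your induction makes explicit the count of non-tree edges against bounded faces. Two small blemishes, neither fatal for the theorem as stated: first, in your aside about the outer face, an interior edge traversed once in each direction contributes~$1$, not~$0$, to~$\sum_f n^K(\partial f)$ modulo~$2$ (you get this bookkeeping right in the Pfaffian part); in any case the Kasteleyn condition is only imposed on bounded faces, so no check on the outer face is needed. Second, your closing remark on the cylinder is off: a cycle winding around the cylinder bounds no disc, yet~$G\setminus C$ can perfectly well admit a dimer configuration for such a~$C$, and the paper explicitly warns that a Kasteleyn orientation on~$G\subset\C$ need \emph{not} be Pfaffian --- which is precisely why the parity of~$n^{\widetilde{K}}(\partial f_0)$ for the bounded component~$f_0$ of~$\R^2\setminus\C$ has to be verified separately in the proof of Proposition~\ref{prop:Mob}.
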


\begin{proof}
Fix a graph~$G$ embedded in the plane, together with an arbitrary orientation~$K$ of its edges. If a face~$f$ of~$G\subset\R^2$ is such that~$n^{K}(\partial f)$ is even,
draw a path from the interior of~$f$ to the outer face, transverse to~$G$, and invert~$K$ on each edge crossed by this path.
Repeating this procedure for each face with~$n^{K}(\partial f)$ even, we obtain a Kasteleyn orientation.
To prove that such a Kasteleyn orientation~$K$ on~$G$ is Pfaffian, let us fix a cycle~$C\subset G$ of even length such that~$G\setminus C$ admits a dimer configuration.
Since~$C$ is a simple closed plane curve, it bounds a closed disc, with~$V$ vertices,~$E$ edges, and~$F$ faces. Let us write~$V=V_\text{int}+V_\text{ext}$
and~$E=E_\text{int}+E_\text{ext}$, where~$V_\text{int}$ (resp.~$E_\text{int}$) denotes the number of vertices (resp. edges) in the interior of the disc. Note that since~$C$ is
simple and closed,~$V_\text{ext}$ and~$E_\text{ext}$ coincide with the length of~$C$, and are therefore even. Also, since~$G\setminus C$ admits a dimer configuration,~$V_\text{int}$
is even as these interior vertices are matched by this dimer configuration. Summing over all the faces of this disc, and computing modulo~$2$, we therefore have
\[
0=\sum_f(n^K(\partial f)+1)=n^K(C)+E_\text{int}+F=n^K(C)+V+E+F=n^K(C)+1\,,
\]
since the Euler characteristic of the disc is~$V-E+F=1$. This completes the proof.
\end{proof}

Let us turn to the case of a graph~$G$ embedded in a cylinder~$\mathcal{C}$. Very naturally, we shall say that an orientation~$K$ is a {\em Kasteleyn orientation\/}
on~$G\subset\C$ if~$n^K(\partial f)$ is odd for each face~$f$ of~$G\subset\C$, where~$\partial f$ is oriented via a fixed orientation of the cylinder. Note however that in this case,
a Kasteleyn orientation is not always Pfaffian. Indeed, embedding~$\C$ in the plane,~$\R^2\setminus\C$ has a bounded component~$f_0$, and~$n^K(\partial f_0)$ might be odd, or even.
In the former case,~$K$ is a Kasteleyn orientation for~$G\subset\R^2$, and therefore a Pfaffian orientation by Theorem~\ref{thm:Kast}. In the later case, however,
the Pfaffian of the corresponding matrix~$A^K(G)$ counts some signed partition function that does not equal~$Z(G)$ in general. Luckily for us, the Kasteleyn orientations that will
naturally appear in our computations will always turn out to be of the former type.

\subsection{Graphs in the M\"obius strip}
\label{sub:Mob}

We now turn to the case of a graph~$G$ embedded in a M\"obius strip~$\M$. This situation is more tricky and involves the introduction of complex valued coefficients in
the skew-adjacency matrices, as first noticed by Tesler~\cite{Tes}. Here, we give a new elementary proof of this result, adapting the geometrical treatment given by the first-named
author in~\cite{Cim}.

In what follows, we will always represent the M\"obius strip as a rectangle with antiperiodic boundary conditions on the vertical sides, and free boundary conditions on the
horizontal sides (recall Figure~\ref{fig:rectangles}). The graph~$G$ is embedded in this rectangle with some specific edges intersecting transversally the vertical sides,
and no edge intersecting the horizontal sides. We assign to each such specific edge~$e$ the value~$\omega(e)=1$, and to all the others the
value~$\omega(e)=0$. As usual, we also write~$\omega(D)=\sum_{e\in D}\omega(e)$, and similarly for any set of edges.

Given an arbitrary orientation~$K$ on the edges of~$G\subset\M$, let~$A^{K,\omega}=(a_{uv})$ denote the modified weighted skew-adjacency matrix given by
\begin{equation}
\label{equ:A'}
a_{uv}=\sum_{e=(u,v)}\e_{uv}^K(e)\,i^{\omega(e)}\nu(e)\,.
\end{equation}
Following the notation and the discussion of subsection~\ref{sub:pf} up to Equation~(\ref{equ:Pf'}), we easily find
\begin{equation}
\label{equ:Pf''}
i^{-\omega(D_0)}\,\e^K(D_0)\,\Pf(A^{K,\omega}(G))=\sum_{D\in\D(G)}(-1)^{\sum_j(n^K(C_j)+1)}\,i^{\sum_j\omega(C_j\setminus D_0)-\omega(C_j\cap D_0)}\,\nu(D)\,,
\end{equation}
where~$D\Delta D_0=\bigsqcup_j C_j$.

The next step is to define the right notion of a Kasteleyn orientation is this non-orientable setting. This can be done as follows.
Given a graph~$G\subset\M$ as described above, let~$\widetilde{G}\subset\C$ denote the graph embedded in the cylinder obtained by taking two copies of~$G\subset\M$
and gluing them along the vertical sides, as illustrated in Figure~\ref{fig:coverM}.
Also, if~$K$ is an orientation on~$G$, let~$\widetilde{K}$ denote the orientation
on~$\widetilde{G}$ obtained by lifting~$K$ to the edges of~$\widetilde{G}$, and by inverting the orientation of all the edges that are completely contained in the second
copy of the M\"obius strip. We shall say that~$K$ is a {\em Kasteleyn orientation\/} on~$G\subset\M$ if~$\widetilde{K}$ is a Kasteleyn orientation on~$\widetilde{G}\subset\C$,
as defined in subsection~\ref{sub:plane}. A didactic example is given in Figure~\ref{fig:K}.
Note that each face~$f$ of~$G\subset\M$ lifts to two faces~$\widetilde{f}'$ and~$\widetilde{f}''$ of~$\widetilde{G}\subset\C$, and by definition,~$\widetilde{K}$
satisfies the parity condition around~$\widetilde{f}'$ if and only if it does around~$\widetilde{f}''$. Then, one can check that any~$G\subset\M$ admits a Kasteleyn orientation,
as in the planar case.
Note that this notion not only depends on the way~$G$ is embedded in~$\M$, but also on the way the M\"obius strip is
drawn as a rectangle, which is encoded by~$\omega$. Therefore, we shall sometimes be more precise and speak of a Kasteleyn orientation on~$(G\subset\M,\omega)$.

\begin{figure}[t]
\labellist\small\hair 2.5pt
\pinlabel {$\C$} at -20 105
\pinlabel {$\M$} at 630 105
\pinlabel {$\to$} at 388 105
\pinlabel {$K$} at 517 180
\pinlabel {$\widetilde K$} at 169 180
\endlabellist
\centering
\includegraphics[width=0.5\textwidth]{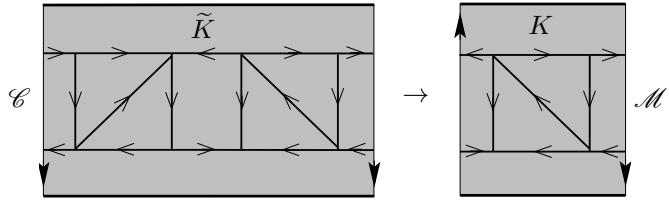}
\caption{An example of a Kasteleyn orientation~$K$ on some graph~$G$ embedded in the M\"obius strip, and of the corresponding Kasteleyn orientation~$\widetilde K$ on the
orientation cover~$\widetilde{G}\subset\C$.}
\label{fig:K}
\end{figure}

Let us define~$Z_{\alpha,D_0}(G)=\sum_{[D\Delta D_0]=\alpha}\nu(D)$, where the sum is over all~$D\in\D(G)$
such that the parity of the number of times~$D\Delta D_0$ winds around the M\"obius strip is given by~$\alpha\in\Z_2$.
We are now ready to state the main result of this paragraph.

\begin{theorem}
\label{thm:Mob}
Fix an arbitrary dimer configuration~$D_0$ on a graph~$G$ embedded in a M\"obius strip~$\M$. Then, for any Kasteleyn orientation~$K$ on~$(G\subset\M,\omega)$,
\[
i^{-\omega(D_0)}\,\e^K(D_0)\,\Pf(A^{K,\omega}(G))=Z_{0,D_0}(G)\pm i\,Z_{1,D_0}(G)\,.
\]
\end{theorem}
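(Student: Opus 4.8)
The plan is to start from Equation~(\ref{equ:Pf''}), which already expresses the left-hand side as a sum over dimer configurations $D$, weighted by $\nu(D)$ and by the sign-and-phase factor
\[
(-1)^{\sum_j(n^K(C_j)+1)}\,i^{\sum_j\big(\omega(C_j\setminus D_0)-\omega(C_j\cap D_0)\big)},
\]
where $D\Delta D_0=\bigsqcup_j C_j$ is a disjoint union of simple closed curves in $G\subset\M$. The entire content of the theorem is thus to show that this factor equals $+1$ when $[D\Delta D_0]=0$ (i.e.\ $\sum_j\omega(C_j)$ even) and equals $\pm i$ (with a global sign independent of $D$) when $[D\Delta D_0]=1$. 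Since the curves $C_j$ are pairwise disjoint and contribute multiplicatively, it suffices to analyse a \emph{single} simple closed curve $C$ in $G\subset\M$ and prove: (a) if $\omega(C)$ is even (so $C$ is two-sided, bounding on each side), then $(-1)^{n^K(C)+1}i^{\omega(C\setminus D_0)-\omega(C\cap D_0)}=1$; and (b) if $\omega(C)$ is odd (so $C$ is the core-type one-sided curve, up to isotopy), then this same expression equals a fixed fourth root of unity $\pm i$, the ambiguity being exactly the unknown sign in the statement.

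For case (a), the curve $C$ either bounds a disc in $\M$ or separates $\M$ into a disc and a M\"obius strip; in both situations one lifts to $\widetilde{G}\subset\C$, where $C$ lifts to two disjoint simple closed curves, and applies the disc-counting Euler-characteristic argument exactly as in the proof of Theorem~\ref{thm:Kast} — using that $\widetilde{K}$ is Kasteleyn on $\widetilde G\subset\C$, that $G\setminus C$ (hence $\widetilde G$ minus the lift) admits a dimer configuration so the interior vertex counts are even, and that $\omega(C)$ even makes the phase exponent $\omega(C\setminus D_0)-\omega(C\cap D_0)$ even, hence $i^{(\cdot)}=\pm1$; a parity bookkeeping then kills it against $(-1)^{n^K(C)+1}$. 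For case (b), the key geometric input — the ``special geometry of simple closed curves in the M\"obius strip'' alluded to in the introduction — is that a one-sided simple closed curve $C$ has a M\"obius-strip neighbourhood whose complement is a disc; lifting to the cylinder, $C$ lifts to a \emph{single} simple closed curve $\widetilde C$ in $\widetilde G\subset\C$ of length $2|C|$ with $n^{\widetilde K}(\widetilde C)=2n^K(C)-\omega(C)$ (the correction $\omega(C)$ coming from the edges crossing the gluing, where $\widetilde K$ was defined by flipping). Feeding $\widetilde C$ into the cylinder Kasteleyn computation, together with $\widetilde G$ minus these vertices admitting a dimer configuration, pins down the parity of $n^K(C)$ relative to $\omega(C)$, and then a direct computation of $(-1)^{n^K(C)+1}i^{\pm\omega(C)}$ with $\omega(C)$ odd gives exactly $\pm i$ with a sign that does not depend on $C$ (only on the chosen square root of $i^2=-1$ implicit in the matrix, or equivalently on the chosen orientation of $\M$).

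The main obstacle I expect is case~(b): making the relation $n^{\widetilde K}(\widetilde C)=2n^K(C)-\omega(C)$ precise and checking it is forced by the definition of $\widetilde K$ (lift, then invert on the second copy), and then correctly extracting the parity of $n^K(C)$ from the cylinder count — this is where one must be careful that $\widetilde C$ really is an admissible cycle for the Pfaffian argument on $\widetilde G\subset\C$ (that $\widetilde G\setminus\widetilde C$ has a dimer configuration follows from $D\Delta D_0$ restricted to $C$ being matchable on the complement, lifted). Once the single-curve statement is in hand, one multiplies over $j$: the product of the $+1$'s from two-sided curves and the $(\pm i)$'s from one-sided curves gives $(\pm i)^{k}$ where $k=\sum_j\omega(C_j)\bmod 2=[D\Delta D_0]$, which is $1$ if $[D\Delta D_0]=0$ and $\pm i$ if $[D\Delta D_0]=1$ (using $(\pm i)^{\text{even}}=\pm1$ and folding sign corrections into the reference term, or observing the even case reduces to case (a) applied to the union). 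This yields precisely $Z_{0,D_0}(G)\pm i\,Z_{1,D_0}(G)$, completing the proof.
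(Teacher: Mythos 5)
Your reduction to a single simple closed curve via Equation~(\ref{equ:Pf''}) and the trichotomy of simple closed curves in $\M$ is the right starting point, and your case (a) is essentially the paper's argument. Two things you gloss over there: a boundary-parallel curve separates $\M$ into a M\"obius strip and an \emph{annulus}, not a disc; and for an arbitrary cut $\omega$ both the phase $i^{\omega(C\setminus D_0)-\omega(C\cap D_0)}$ and the parity of $n^K(C)$ pick up $\omega$-dependent corrections (the Kasteleyn condition is imposed on $\widetilde K$, not on $K$ itself). The paper neutralises this by first proving invariance of the whole identity under ``vertex flips'' of $\omega$ and then choosing the cut to avoid the bounded discs and the edges of $D_0$; without that preliminary step your parity bookkeeping in case (a) is incomplete, though fixable.

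The genuine gap is case (b). You propose to determine the parity of $n^K(C)$ for the one-sided curve by ``feeding $\widetilde C$ into the cylinder Kasteleyn computation''. But $\widetilde C$ winds once around the cylinder, so it bounds no disc there, and the Euler-characteristic argument of Theorem~\ref{thm:Kast} places no constraint whatsoever on $n^{\widetilde K}(\widetilde C)$ --- this is precisely the caveat in subsection~\ref{sub:plane} that a Kasteleyn orientation on the cylinder need not be Pfaffian. Indeed, the parity of $n^K$ on a core-type curve is a genuine degree of freedom of the Kasteleyn orientation (it is what distinguishes the two quadratic enhancements on $\M$), which is exactly why the theorem carries an undetermined $\pm$ in front of $i\,Z_{1,D_0}(G)$. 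So no local computation can ``pin down'' this parity; what must actually be proved is that the resulting unit is the \emph{same} for every $D$ with $[D\Delta D_0]=1$, and your proposal supplies no mechanism for this: two distinct one-sided curves necessarily intersect, hence never occur in the same symmetric difference, so your single-curve analysis cannot compare them. The paper closes this gap without ever analysing a one-sided curve: it fixes a second reference matching $D_1$ with $[D_1\Delta D_0]=1$, applies the already-established $[\,\cdot\,\Delta\,\cdot\,]=0$ case to the pairs $(D,D_1)$, and concludes that all contributions with $[D\Delta D_0]=1$ carry one common factor $\pm i$. You would need to add this reference-switching (or an equivalent cocycle) argument to complete your proof.
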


\begin{proof}
We will first show that this equality does not depend on the way the M\"obius strip is cut open into a rectangle, i.e. on the choice of~$\omega$. (This is evident for the right-hand side.) More precisely, any two choices of~$\omega$ are related by a sequence of ``vertex flips'', consisting in changing the value of~$\omega$ at each edge adjacent to a fixed
vertex~$v$. (Geometrically, this corresponds to moving the cut across the vertex~$v$.) We claim that if~$\omega'$ is obtained from~$\omega$ by such a vertex flip,
then there is a canonical way to transform a Kasteleyn orientation~$K$ on~$(G\subset\M,\omega)$ into a Kasteleyn orientation~$K'$ on~$(G\subset\M,\omega')$ so
that~$i^{-\omega(D_0)}\,\e^K(D_0)\,\Pf(A^{K,\omega}(G))$ is left unchanged. Indeed, define~$K'$ as coinciding with~$K$ everywhere, except
precisely on the edges~$e$ adjacent to~$v$ such that~$\omega(e)=1$. One easily checks that~$K'$ is indeed a Kasteleyn orientation on~$(G\subset\M,\omega')$,
that~$\Pf(A^{K',\omega'}(G))=i\,\Pf(A^{K,\omega}(G))$ and that~$i^{-\omega'(D_0)}\,\e^{K'}(D_0)=(-i)\,i^{-\omega(D_0)}\,\e^{K}(D_0)$. This proves the claim.

As a second step, let us recall a couple of well-known geometric facts. Simple closed curves in the M\"obius strip~$\M$ fall into three categories: they either bound a disc,
or bound a M\"obius strip isotopic to~$\M$, or do not bound, in which case they wind around~$\M$ exactly once. Furthermore, any two curves that do not bound must intersect
each other. Therefore, given any two dimer configurations~$D,D_0$ on~$G\subset\M$, their symmetric difference~$D\Delta D_0=\bigsqcup_j C_j$ consists of a certain number of
simple closed curves of the first two types, and at most one of the third type.

By Equation~(\ref{equ:Pf''}), we therefore need to show the following claim: if~$D$ is such that~$D\Delta D_0$ bounds discs and M\"obius strips isotopic to~$\M$, then
\[
(-1)^{\sum_j(n^K(C_j)+1)}\,i^{\sum_j\omega(C_j\setminus D_0)-\omega(C_j\cap D_0)}=1\,.
\]
Let us fix~$D$ as above. Clearly, one can cut~$\M$ open into a rectangle without touching any of these discs, without intersecting any of the edges of~$D_0$, and cutting each
of the M\"obius strips open into a rectangle. Choosing the~$\omega$ which corresponds to such a cut, as permitted by the first part of the proof, the equality displayed above is now a consequence of the following two statements:
\begin{enumerate}[(i)]
\item{if~$C\subset D\Delta D_0$ bounds a disc, then~$n^K(C)$ is odd (note that~$\omega(C)=0$ in this case);}
\item{if~$C\subset D\Delta D_0$ bounds a M\"obius strip, then~$n^K(C)$ is even (note that here,~$\omega(C)=2$).}
\end{enumerate}
The proof of the first statement follows {\em verbatim\/} the proof of Kasteleyn's theorem given above (Theorem~\ref{thm:Kast}). To show the second statement, let us denote
by~$V$,~$E$, and~$F$ the number of vertices, edges and faces, respectively, of the M\"obius strip~$\M'$ bounded by~$C$, and let us write~$V=V_\text{int}+V_\text{ext}$
and~$E=E_\text{int}+E_\text{ext}$, where~$V_\text{int}$~(resp.~$E_\text{int}$) denotes the number of vertices (resp. edges) in the interior of~$\M'$.
The cutting open of~$\M$ into a rectangle defines a decomposition of the oriented simple closed curve~$C$ bounding~$\M'$ into two oriented simple curves~$C'$ and~$C''$ in the rectangle, as illustrated in Figure~\ref{fig:M}. Summing over all the faces of~$\M'$,
and using the definition of a Kasteleyn orientation in a M\"obius strip, we get
\[
\sum_f(n^K(\partial f)+1)=\sum_f(n^{\widetilde{K}}(\partial \widetilde{f})+1)+\ell=\ell\,,
\]
with~$\ell$ denoting the number of edges bounding faces of~$\M$ where the orientation~$K$ lifted to the cylinder is inverted to give~$\widetilde{K}$.
(In Figure~\ref{fig:M}, there are~$\ell=3$ such edges, that are drawn with heavier lines.) On the other hand, computing modulo~$2$,
this same sum is equal to
\[
\sum_f(n^K(\partial f)+1)=n^K(C')+n^K(-C'')+\ell+E_\text{int}+F\,.
\]
These two equations lead to the equality modulo~$2$
\begin{equation}
\label{equ:Mob}
0=n^K(C')+n^K(-C'')+E_\text{int}+F=n^K(C)+|C''|+E_\text{int}+F=n^K(C)+|C'|+V\,,
\end{equation}
using the facts that~$|C'|+|C''|=E_\text{ext}$ and that the Euler characteristic of the M\"obius strip is equal to~$V-E+F=0$.
Note that since~$D_0$ does not meet the cut, the length of both~$C'$ and~$C''$ is even, and so is~$V_\text{ext}$. Finally,~$V_\text{int}$ is also even since these interior
vertices are matched by~$D_0$. By Equation~(\ref{equ:Mob}),~$n^K(C)$ is therefore even, and the claim is proved.

\begin{figure}[t]
\labellist\small\hair 2.5pt
\pinlabel {$C'$} at 200 70
\pinlabel {$C''$} at 230 160
\endlabellist
\centering
\includegraphics[width=0.4\textwidth]{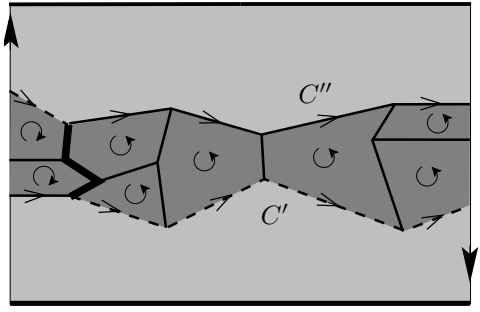}
\caption{Proof of Theorem~\ref{thm:Mob}: the shaded M\"obius strip~$\M'$ is bounded by the oriented simple curves~$C'$ in dashed lines and~$C''$ in solid lines, while
the~$\ell=3$ special edges are in heavier solid lines.}
\label{fig:M}
\end{figure}

In summary, we showed that for any~$D_0\in\D(G)$ and for any Kasteleyn orientation on~$(G\subset\M,\omega)$, we have the equality
\[
\textstyle i^{-\omega(D_0)}\,\e^K(D_0)\,\Pf(A^{K,\omega}(G))=\sum_{[D\Delta D_0]=0}\nu(D)+i\sum_{[D\Delta D_0]=1}\pm\nu(D)\,.
\]
If there is no~$D$ such that~$[D\Delta D_0]=1$, then the proof is complete. Otherwise, fix such a dimer configuration~$D_1$. Note that~$\omega(D_0)$ and~$\omega(D_1)$ have different
parity, and that for any~$D\in\D(G)$,~$[D\Delta D_0]$ and~$[D\Delta D_1]$ are different. Therefore, applying the equality displayed above to~$D_1$, we get
\begin{align*}
\textstyle i^{-\omega(D_0)}\,\e^K(D_0)\,\Pf(A^{K,\omega}(G))&=\pm i\cdot i^{-\omega(D_1)}\,\e^K(D_1)\,\Pf(A^{K,\omega}(G))\\
	&=\textstyle\pm i\left(\sum_{[D\Delta D_1]=0}\nu(D)+i\sum_{[D\Delta D_1]=1}\pm\nu(D)\right)\\
	&=\textstyle\sum_{[D\Delta D_0]=0}\pm\nu(D)\pm i\sum_{[D\Delta D_0]=1}\nu(D)\,.
\end{align*}
The two equations displayed above yield the statement of the theorem.
\end{proof}

As an immediate corollary, we get that the dimer partition function of a graph embedded in a M\"obius strip can be computed using a single Pfaffian.
However, we are interested in another consequence, namely Theorem~\ref{thm:1}.

\subsection{Proof of the identities for the cylinder and M\"obius strip}
\label{sub:proof1}

Our proof relies on the following easy but crucial lemma.

\begin{lemma}
\label{lemma:Pf}
Let~$K$ be any Kasteleyn orientation on~$G\subset\M$, and~$\widetilde{K}$ the corresponding orientation on~$\widetilde{G}\subset\C$.
If~$G$ has an even number of vertices, then~$|\Pf(A^{\widetilde{K}}(\widetilde{G}))|=|\Pf(A^{K,\omega}(G))|^2$.
\end{lemma}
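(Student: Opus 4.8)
The plan is to relate the two skew-adjacency matrices $A^{\widetilde K}(\widetilde G)$ and $A^{K,\omega}(G)$ by an explicit change of basis, and then read off the identity of Pfaffians. Write $G$ as a rectangle with antiperiodic vertical sides; the cover $\widetilde G\subset\C$ consists of a ``first copy'' $G'$ of $G$ and a ``second copy'' $G''$ of $G$, glued along the two vertical seams, with $\widetilde K$ obtained by lifting $K$ and then reversing all edges lying entirely in $G''$. Order the vertices of $\widetilde G$ so that all vertices of $G'$ come first (in the chosen order on $G$), then all vertices of $G''$ (in the same order). With this ordering, $A^{\widetilde K}(\widetilde G)$ has a natural $2\times 2$ block form
\[
A^{\widetilde K}(\widetilde G)=\begin{pmatrix} B & C\\ -C^{t} & B'\end{pmatrix},
\]
where $B$ records edges inside $G'$, $B'$ records edges inside $G''$, and $C$ (resp. $-C^{t}$) records the ``seam'' edges that cross from one copy to the other. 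The first step is to identify each block: edges internal to a copy carry the value $\e^K_{uv}(e)\nu(e)$ (with a global sign flip on $G''$ coming from the reversal), so $B$ equals the skew-adjacency matrix $A^{K}(G)$ built with $\omega\equiv 0$, and $B'=-B^{t}=B$ (it is skew-symmetric). The seam edges are precisely the ones with $\omega(e)=1$: an edge of $G$ crossing a vertical side lifts to two edges of $\widetilde G$, one from $G'$ to $G''$ and one from $G''$ to $G'$.

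The second, and main, step is to conjugate $A^{\widetilde K}(\widetilde G)$ by the block-diagonal matrix $P=\begin{pmatrix} I & 0\\ 0 & iI\end{pmatrix}$ (or a suitable $\pm i$ variant), i.e. to rescale every vertex of the second copy by $i$. One computes $P\,A^{\widetilde K}(\widetilde G)\,P^{t}$ and checks that the off-diagonal blocks acquire a factor $i$, which converts each seam entry $\pm\nu(e)$ into $\pm i\,\nu(e)$ — exactly the factor $i^{\omega(e)}$ appearing in the definition~(\ref{equ:A'}) of $A^{K,\omega}(G)$ — while the diagonal block $B'$ is multiplied by $i^2=-1$, turning the ``$+B$'' in the lower-right corner into $-B$. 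The resulting matrix is
\[
\begin{pmatrix} A^{K,\omega}(G) & M\\ -M^{t} & -A^{K,\omega}(G)^{t}\end{pmatrix}
\]
for an appropriate matrix $M$; using $A^{K,\omega}(G)^{t}=-A^{K,\omega}(G)$ and a standard identity for the Pfaffian of a block matrix of the shape $\left(\begin{smallmatrix} X & M\\ -M^{t} & X\end{smallmatrix}\right)$ with $X$ skew-symmetric — or, more elementarily, by recognizing this as (conjugate to) $\left(\begin{smallmatrix} 0 & X\\ -X & 0\end{smallmatrix}\right)$-type block structure after a further orthogonal change of basis — one gets that its Pfaffian equals $\pm\det(A^{K,\omega}(G))=\pm\Pf(A^{K,\omega}(G))^{2}$ up to the seam contribution. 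Here is where one must be slightly careful: the correct clean statement is that conjugating and then performing the substitution ``$v''\mapsto v'+iv''$, $v'\mapsto v'-iv''$'' (a genuine change of basis since $G$ has an even number of vertices, so the sizes match and the determinant of the transformation is a nonzero constant) block-diagonalizes the matrix into two blocks each equal, up to sign and up to the constant from the transformation, to $A^{K,\omega}(G)$.

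The third step is purely bookkeeping: $\Pf$ transforms under $A\mapsto QAQ^{t}$ by $\Pf(QAQ^{t})=\det(Q)\,\Pf(A)$, so all the rescalings and changes of basis contribute only an overall nonzero scalar of modulus $1$ (a product of powers of $i$ and of $\pm 1$), which disappears once we take absolute values. We conclude $|\Pf(A^{\widetilde K}(\widetilde G))|=|\Pf(A^{K,\omega}(G))|^{2}$.

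The hard part will be step two: setting up the block decomposition so that the reversal of $\widetilde K$ on the second copy and the insertion of $i$'s on the seam edges combine correctly, and in particular checking that the diagonalizing substitution really is available — this uses that $G$ has an even number of vertices (so the two copies have equal, even size), which is exactly the hypothesis of the lemma. The rest is standard Pfaffian algebra. One subtlety to watch is whether a seam edge of $G$ is a loop or connects two distinct vertices, but since loops contribute nothing to a skew-adjacency matrix this causes no trouble. It may be cleanest to present step two via the explicit conjugating matrix and cite the elementary fact $\Pf\!\left(\begin{smallmatrix} 0 & N\\ -N^{t} & 0\end{smallmatrix}\right)=\pm\det N$, reducing everything to showing that after rescaling the cover matrix becomes equivalent to a matrix of that anti-block form with $N=A^{K,\omega}(G)$.
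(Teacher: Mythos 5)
Your overall strategy is the paper's: order the vertices copy by copy, read off a $2\times 2$ block structure for $A^{\widetilde K}(\widetilde G)$, rescale the second copy by $i$, and split the result into $A^{K,\omega}(G)$ and its complex conjugate, with all scalars disappearing in the absolute value. However, the two key intermediate identities in your step two are wrong as written, and they would have to be corrected before the final bookkeeping goes through. First, reversing the orientation of every edge lying inside the second copy \emph{negates} the corresponding block, so the lower-right block is $B'=-B$ (equivalently $B^{t}$), not $-B^{t}=B$; moreover $B$ is not all of $A^{K}(G)$ but only its $\omega=0$ part, since the seam edges contribute exclusively to the off-diagonal blocks. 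Second, precisely because the seam entries sit off the diagonal, conjugating by $P=\mathrm{diag}(I,iI)$ cannot turn the diagonal blocks into $A^{K,\omega}(G)$: writing $A^{K,\omega}(G)=M_1+iM_2$ with $M_1$ (resp.\ $M_2$) the real skew-symmetric matrix collecting the edges with $\omega(e)=0$ (resp.\ $\omega(e)=1$), the correct blocks are $B=M_1$, $B'=-M_1$, $C=-C^{t}=M_2$, and the congruence by $P$ produces $\left(\begin{smallmatrix}M_1&iM_2\\ iM_2&M_1\end{smallmatrix}\right)$, not $\left(\begin{smallmatrix}A^{K,\omega}(G)&M\\ -M^{t}&-A^{K,\omega}(G)^{t}\end{smallmatrix}\right)$.

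The repair is exactly the paper's computation: your further substitution $v'\mapsto v'+v''$, $v''\mapsto v'-v''$ \emph{block-diagonalizes} $\left(\begin{smallmatrix}M_1&iM_2\\ iM_2&M_1\end{smallmatrix}\right)$ into $M_1+iM_2$ and $M_1-iM_2$ -- it does not produce an anti-diagonal block matrix with $N=A^{K,\omega}(G)$, so the identity $\Pf\left(\begin{smallmatrix}0&N\\ -N^{t}&0\end{smallmatrix}\right)=\pm\det N$ is not the one you need. Instead one uses $\Pf(QAQ^{t})=\det(Q)\Pf(A)$ together with $\Pf\bigl(\mathrm{diag}(X,\overline{X})\bigr)=\Pf(X)\,\overline{\Pf(X)}=|\Pf(X)|^{2}$ (the conjugate appears because $M_1,M_2$ are real); the paper reaches the same point by elementary row and column operations on the determinant. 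Finally, the evenness of $|V(G)|$ is not needed to ``make the sizes match'' -- the two copies always have the same number of vertices -- but it is what makes both Pfaffians well defined and what makes $\det(P)^{2}=i^{2|V(G)|}$ equal to $1$; since you take absolute values at the end, the latter point is harmless either way.
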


\begin{proof}
Enumerate the vertices of~$\widetilde{G}$ contained in one copy of~$G$, followed by the vertices contained in the other copy with the same order. Recall that~$\widetilde{K}$ is
obtained by lifting~$K$ to the edges of~$\widetilde{G}$ and by inverting the orientation of all the edges that are completely contained in the second copy of the M\"obius strip.
Using the definitions~$(\ref{equ:A})$ and~$(\ref{equ:A'})$ of the corresponding skew-adjacency matrices, we obtain
\[
A^{\widetilde{K}}(\widetilde{G})=\begin{pmatrix}M_1&\phantom{-}M_2\cr M_2& -M_1\end{pmatrix}\quad\text{and}\quad A^{K,\omega}(G)=M_1+i M_2\,,
\]
with~$M_1,M_2$ real square matrices of even dimension. Using obvious operations, we get
\[
\det A^{\widetilde{K}}(\widetilde{G})=\begin{vmatrix}M_1&iM_2\cr iM_2&M_1\end{vmatrix}=\begin{vmatrix}M_1+iM_2&iM_2\cr M_1+iM_2&M_1\end{vmatrix}
=\begin{vmatrix}M_1+iM_2&iM_2\cr 0&M_1-iM_2\end{vmatrix}=\left|\det A^{K,\omega}(G)\right|^2\,.
\]
Since the determinant is the square of the Pfaffian, this proves the lemma.
\end{proof}

\begin{proposition}
\label{prop:Mob}
Let~$G\subset\M$ be a weighted graph embedded in the M\"obius strip, and let~$\widetilde{G}\subset\C$ denote its 2-fold cover embedded in the cylinder.
If~$G$ has an even number of vertices, then
\[
Z(\widetilde{G})=Z_{0,D_0}(G)^2+Z_{1,D_0}(G)^2
\]
for any dimer configuration~$D_0$ on~$G$.
\end{proposition}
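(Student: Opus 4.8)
The plan is to chain together the two results just proved --- Theorem~\ref{thm:Mob} and Lemma~\ref{lemma:Pf} --- the only substantive point remaining being that the orientation $\widetilde{K}$ on $\widetilde{G}\subset\C$ is Pfaffian. First I would fix a dimer configuration $D_0$ on $G$ and, for some auxiliary choice of $\omega$, a Kasteleyn orientation $K$ on $(G\subset\M,\omega)$, which exists by the discussion preceding Theorem~\ref{thm:Mob}. Since the weights $\nu(e)$ are real, the partial partition functions $Z_{0,D_0}(G)$ and $Z_{1,D_0}(G)$ are real, while the scalar $i^{-\omega(D_0)}\,\e^K(D_0)$ has modulus~$1$. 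Taking the squared modulus in the identity of Theorem~\ref{thm:Mob} therefore gives
\[
|\Pf(A^{K,\omega}(G))|^2=\bigl|Z_{0,D_0}(G)\pm i\,Z_{1,D_0}(G)\bigr|^2=Z_{0,D_0}(G)^2+Z_{1,D_0}(G)^2 .
\]
Combining this with Lemma~\ref{lemma:Pf}, which gives $|\Pf(A^{\widetilde{K}}(\widetilde{G}))|=|\Pf(A^{K,\omega}(G))|^2$, I obtain $|\Pf(A^{\widetilde{K}}(\widetilde{G}))|=Z_{0,D_0}(G)^2+Z_{1,D_0}(G)^2$.

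It thus remains to show that $Z(\widetilde{G})=|\Pf(A^{\widetilde{K}}(\widetilde{G}))|$, i.e. that $\widetilde{K}$ is a Pfaffian orientation on $\widetilde{G}\subset\C$. By construction $\widetilde{K}$ is a Kasteleyn orientation on $\widetilde{G}\subset\C$, so by the discussion of subsection~\ref{sub:plane} it suffices to check that it is of the type for which Pfaffianness holds, that is, that $n^{\widetilde{K}}(\partial f_0)$ is odd, where $f_0$ is the bounded component of $\R^2\setminus\C$; equivalently, that $\widetilde{K}$ is a Kasteleyn orientation on $\widetilde{G}\subset\R^2$. Now $\partial f_0$ is the boundary of the face of $\widetilde{G}\subset\C$ adjacent to one of the two boundary circles of the cylinder, and this face is precisely one of the two lifts of the face $f$ of $G\subset\M$ adjacent to $\partial\M$.

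The parity of $n^{\widetilde{K}}(\partial f_0)$ I would settle by an Euler characteristic computation inside $\M$, entirely in the spirit of the proof of Theorem~\ref{thm:Mob}: $\M$ is the union of a collar of $\partial\M$ (whose inner boundary cycle is $\partial f$) and a sub-M\"obius strip isotopic to $\M$, the orientation $\widetilde{K}$ satisfies the parity condition around every disc face, and one keeps track of the edges of $\widetilde{G}$ whose orientation is reversed when $K$ is lifted to $\widetilde{K}$, namely those lying entirely in the second copy. This bookkeeping --- reducing ``$\widetilde{K}$ is Pfaffian on $\widetilde{G}\subset\C$'' to a single parity statement about the cycle $\partial f_0$ and then settling that parity by pushing the count down to the M\"obius strip --- is the one step demanding care; everything else is a short formal manipulation. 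Granting it, the three displayed equalities combine to give $Z(\widetilde{G})=Z_{0,D_0}(G)^2+Z_{1,D_0}(G)^2$ for every $D_0\in\D(G)$, which is the assertion of the proposition.
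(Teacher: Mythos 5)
Your proposal is correct and follows essentially the same route as the paper: combine Theorem~\ref{thm:Mob} with Lemma~\ref{lemma:Pf}, then check that the Kasteleyn orientation~$\widetilde K$ on~$\widetilde G\subset\C$ is also Kasteleyn on~$\widetilde G\subset\R^2$ by showing that~$n^{\widetilde K}(\partial f_0)$ is odd. The one piece of bookkeeping you defer is exactly what Equation~(\ref{equ:Mob}) already supplies: for the cycle~$C$ of~$G$ bounding a sub-M\"obius strip containing all of~$G$, the definition of~$\widetilde K$ gives~$n^{\widetilde K}(\partial f_0)=n^K(C)+|C'|+1$ in~$\Z_2$, while~(\ref{equ:Mob}) with~$V$ equal to the total (even) number of vertices yields~$n^K(C)+|C'|=0$ --- this is precisely where the even-vertex hypothesis enters.
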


\begin{proof}
Using Theorem~\ref{thm:Mob} and Lemma~\ref{lemma:Pf}, we have
\[
Z_{0,D_0}(G)^2+Z_{1,D_0}(G)^2=\left|\Pf(A^{K,\omega}(G))\right|^2=|\Pf(A^{\widetilde{K}}(\widetilde{G}))|\,.
\]
By Theorem~\ref{thm:Kast}, we only need to check that, given any Kasteleyn orientation~$K$ on~$G\subset\M$, the corresponding Kasteleyn orientation~$\widetilde{K}$
on~$\widetilde{G}\subset\C$ is also a Kasteleyn orientation on~$\widetilde{G}\subset\R^2$. In other words, writing~$f_0$ for the bounded component of~$\C\subset\R^2$,
we need to verify that~$n^{\widetilde{K}}(\partial f_0)$ is odd. Let~$C$ denote the simple closed curve in~$G$ bounding the M\"obius strip~$\M$. Cutting~$\M$ open into
a rectangle decomposes~$C$ into two simple curves~$C'$ and~$C''$ (recall Figure~\ref{fig:M}), and by definition of~$\widetilde{K}$, we have the
equality~$n^{\widetilde{K}}(\partial f_0)=n^K(C)+|C'|+1$ in~$\Z_2$. (Note that one of the lifts of~$C$ is equal to~$\partial f_0$.)
Therefore, we need to check that~$n^K(C)$ and~$|C'|$ have the same parity. By Equation~(\ref{equ:Mob}), this is the case if and only if~$G$ contains an even number of vertices,
which we assumed. 
\end{proof}

The proof of Theorem~\ref{thm:1} is now straightforward.

\begin{proof}[Proof of Theorem~\ref{thm:1}]
If~$G\subset\M$ is locally bipartite but not globally bipartite, then any simple closed curve in~$G$ that does not bound in~$\M$ is of odd length.
Hence, there is no dimer configuration~$D$ such that~$[D\Delta D_0]=1$, and~$Z_{1,D_0}(G)$ vanishes.
By Proposition~\ref{prop:Mob},~$Z(\widetilde{G})=Z_{0,D_0}(G)^2=Z(G)^2$, proving the first point.

To show the second point, let us assume that~$G\subset\M$ in invariant by a horizontal translation~$\tau$, and that for some~$D_0\in\D(G)$,~$[D_0\Delta\tau(D_0)]=1$.
Then,~$\tau^{-1}$ defines a bijection of~$\D(G)$ onto itself with~$\nu(\tau^{-1}(D))=\nu(D)$ for all~$D\in\D(G)$.
Furthermore, we have the equalities in~$\Z_2$:
\[
[\tau^{-1}(D)\Delta D_0]=[D\Delta\tau(D_0)]=[D\Delta D_0]+[D_0\Delta\tau(D_0)]=[D\Delta D_0]+1\,.
\]
In other words,~$\tau^{-1}$ maps the dimer configurations contributing to~$Z_{0,D_0}(G)$ to the ones contributing to~$Z_{1,D_0}(G)$, and vice-versa. This implies an
equality between these two quantities. The result now follows from Proposition~\ref{prop:Mob} together with the trivial equality~$Z(G)=Z_{0,D_0}(G)+Z_{1,D_0}(G)$.
\end{proof}


\section{The identities for surfaces of arbitrary genus}
\label{sec:gen}

The aim of this section is to prove our main result in its most general form, using the results of~\cite{Cim}. We start in subsection~\ref{sub:hom} by recalling the
necessary terminology, while subsection~\ref{sub:general} contains the main theorem and its proof, as well as the proof of Theorem~\ref{thm:2}.

\subsection{Homology, orientation covers, and quadratic enhancements}
\label{sub:hom}

In all this paragraph,~$\Sigma$ denotes a closed connected surface, not necessarily orientable.

Given a graph~$G\subset\Sigma$ whose complement consists of topological discs, let~$C_0$ (resp.~$C_1$,~$C_2$) denote the~$\Z_2$-vector space with basis the set of vertices
(resp. edges, faces) of~$G\subset\Sigma$. Also, let~$\partial_2\colon C_2\to C_1$ and~$\partial_1\colon C_1\to C_0$ denote the boundary operators defined in the obvious way.
Since~$\partial_1\circ\partial_2$ vanishes, the space of cycles~$\mathrm{ker}(\partial_1)$ contains the space~$\partial_2(C_2)$ of boundaries.
Therefore, one can define the {\em first homology space\/}~$H_1(\Sigma;\Z_2)$ as the quotient~$\mathrm{ker}(\partial_1)/\partial_2(C_2)$. This space turns out not to depend
on~$G$, but only on~$\Sigma$ (see~\cite{Hat} for details). Also, the intersection of curves defines a non-degenerate bilinear form on~$H_1(\Sigma;\Z_2)$, that will be denoted
by~$(\alpha,\beta)\mapsto \alpha\cdot\beta$. Finally, note that any simple closed curve in~$\Sigma$ admits a bicollar neighborhood
which is either a cylinder or a M\"obius strip. Assigning to this cycle the value~$0$ in the first case and~$1$ in the second gives a well-defined map in homology, that is usually
denoted by~$w_1\colon H_1(\Sigma;\Z_2)\to\Z_2$.

Let us be more specific. If~$\Sigma$ is orientable, then there is a unique integer~$g\ge 0$ such that~$\Sigma$ is homeomorphic to the connected sum of~$g$ tori (the sphere if~$g=0$).
In such a case, the space~$H_1(\Sigma;\Z_2)$ has dimension~$2g$, the intersection form is given by the
matrix~$\left(\begin{smallmatrix}0&1\cr 1&0\end{smallmatrix}\right)^{\oplus g}$ with respect to the right basis, while~$w_1$ vanishes.
On the other hand, if~$\Sigma$ is non-orientable, then there is a unique integer~$h\ge 1$ such that~$\Sigma$ is homeomorphic to the connected sum of~$h$ real projective planes~$\R P^2$,
and~$H_1(\Sigma;\Z_2)$ has dimension~$h$. Note that~$\Sigma$ can also be described as the connected sum of the orientable surface of genus~$\frac{h-1}{2}$ with~$\R P^2$, if~$h$ is odd,
and as the connected sum of the orientable surface of genus~$\frac{h}{2}-1$ with the Klein bottle~$\K$, if~$h$ is even. It follows that, with respect to the right basis,
the intersection form is given by the matrix~$\left(\begin{smallmatrix}0&1\cr 1&0\end{smallmatrix}\right)^{\oplus (h-1)/2}\oplus (1)$ in the first case, and by
the matrix~$\left(\begin{smallmatrix}0&1\cr 1&0\end{smallmatrix}\right)^{\oplus h/2-1}\oplus\left(\begin{smallmatrix}0&1\cr 1&1\end{smallmatrix}\right)$ in the second.
The homomorphism~$w_1$ vanishes on the first~$h-1$ basis elements, and takes value~$1$ on the last one.

As already discussed in the introduction,~$w_1$ defines a homomorphism~$\pi_1(\Sigma)\to\Z_2$, and therefore, a~$2$-fold
cover~$p\colon\widetilde{\Sigma}\to\Sigma$ called the {\em orientation cover\/}. If~$\Sigma$ is orientable, then this cover is trivial.
If~$\Sigma$ is non-orientable on the other hand, then~$\widetilde{\Sigma}$ is a connected orientable closed surface. An easy Euler characteristic computation shows that
if~$\Sigma$ is of genus~$h\ge 1$, then~$\widetilde{\Sigma}$ is of genus~$g=h-1$. For example, the orientation cover of the projective plane~$\R P^2$ is given by the sphere~$S^2$,
while the orientation cover of the Klein bottle~$\K$ is given by the torus~$\mathbb{T}$ (see Figure~\ref{fig:coverK}).

We now turn to quadratic enhancements. Let~$H$ be a~$\Z_2$-vector space endowed with a~$\Z_2$-valued bilinear form~$(\alpha,\beta)\mapsto \alpha\cdot\beta$ and a
linear form~$w\colon H\to\Z_2$. A {\em quadratic enhancement\/}~\cite[Section~3]{K-T} on~$(H,\cdot,w)$ is a map~$q\colon H\to\Z_4$ such that
\[
q(\alpha)-w(\alpha)\;\text{ belongs to }\;2\Z_2\subset\Z_4\quad\text{and}\quad q(\alpha+\beta)=q(\alpha)+q(\beta)+2(\alpha\cdot\beta)\in\Z_4
\]
for all~$\alpha,\beta\in H$. (Here,~$x\mapsto 2x$ denotes the inclusion homomorphism~$2\colon\Z_2\to\Z_4$.)
Note that there are exactly~$|H|$ quadratic enhancements on~$(H,\cdot,w)$, as the set of such forms is an affine space over~$\mathit{Hom}(H;\Z_2)$.
Note also that if~$w$ vanishes, then a quadratic enhancement on~$(H,\cdot,0)$ is equivalent to a map~$q'=\frac{1}{2}q\colon H\to\Z_2$ such
that~$q'(\alpha+\beta)=q'(\alpha)+q'(\beta)+\alpha\cdot\beta$ for all~$\alpha,\beta\in H$; this is known as a {\em quadratic form\/} on~$(H,\cdot)$.

If the triple~$(H,\cdot,w)$ is given by~$H_1(\Sigma;\Z_2)$, the intersection form, and the linear map~$w_1$, then we simply speak of a {\em quadratic enhancement on~$\Sigma$\/}
(and of a {\em quadratic form on~$\Sigma$\/} in the orientable case). By the discussion above, there are exactly~$|H_1(\Sigma;\Z_2)|$ quadratic enhancements
on~$\Sigma$, i.e.~$2^{2g}$ if~$\Sigma$ is orientable of genus~$g$, and~$2^h$ if~$\Sigma$ is non-orientable of genus~$h$.

In order to state our results in the most general form, we will need the following proposition.

\begin{proposition}
\label{prop:q}
Let~$\Sigma$ be a non-orientable surface, and let~$p\colon\widetilde{\Sigma}\to\Sigma$ be its orientation cover. Given any quadratic enhancement~$q$ on~$\Sigma$, there
exists a unique quadratic form~$\widetilde{q}$ on~$\widetilde{\Sigma}$ with the following properties.
\begin{enumerate}[(i)]
\item{If~$C\subset\Sigma$ is a simple closed curve with~$w_1(C)=1$, then~$\widetilde{q}(\widetilde{C})=0$, where~$\widetilde{C}=p^{-1}(C)$.}
\item{If~$C\subset\Sigma$ is a simple closed curve with~$w_1(C)=0$, then~$\widetilde{q}(\widetilde{C}')=\widetilde{q}(\widetilde{C}'')=\frac{1}{2}q(C)$,
where~$\widetilde{C}'\sqcup\widetilde{C}''=p^{-1}(C)$.}
\end{enumerate}
\end{proposition}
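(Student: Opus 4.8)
The plan is to construct $\widetilde q$ directly on a basis of $H_1(\widetilde\Sigma;\Z_2)$ adapted to the covering, using the transfer/lift maps between $H_1(\Sigma;\Z_2)$ and $H_1(\widetilde\Sigma;\Z_2)$, and then to check the two defining properties and uniqueness. Write $p_*\colon H_1(\widetilde\Sigma;\Z_2)\to H_1(\Sigma;\Z_2)$ for the map induced by the covering projection, and $\mathrm{tr}\colon H_1(\Sigma;\Z_2)\to H_1(\widetilde\Sigma;\Z_2)$ for the transfer, which sends a class represented by a loop $C$ to the class of $p^{-1}(C)$; concretely, $\mathrm{tr}(C)=\widetilde C$ when $w_1(C)=1$ (a single loop upstairs) and $\mathrm{tr}(C)=\widetilde C'+\widetilde C''$ when $w_1(C)=0$. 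The key structural facts I will need are the standard relations $p_*\circ\mathrm{tr}=0$ (since a fibre bounds, being the preimage of a point up to homology — more precisely $p_*(\widetilde C'+\widetilde C'')=2C=0$ over $\Z_2$, and $p_*(\widetilde C)=2C=0$) together with the deck transformation $\theta$ generating $\Z_2$, for which $p_*\circ\theta_*=p_*$ and $\theta_*\circ\mathrm{tr}=\mathrm{tr}$, and the fact that $\ker p_*$ is precisely the image of the part of $H_1(\Sigma;\Z_2)$ on which $w_1$ does not vanish, spanned by the classes $\widetilde C'$ (equivalently $\widetilde C''=\widetilde C'+\mathrm{tr}(C)$) for $w_1(C)=0$. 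I will also use that $p_*$ is injective on the "orientable part" when suitably set up, so that $H_1(\widetilde\Sigma;\Z_2)$ decomposes (over $\Z_2$, non-canonically) into the image of $\mathrm{tr}$ together with a complement mapping isomorphically onto the $w_1=0$ part of $H_1(\Sigma;\Z_2)$.

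Concretely I would work with the standard basis of $H_1(\Sigma;\Z_2)$ described just above in the paper: in the non-orientable genus-$h$ case, basis elements $e_1,\dots,e_{h-1}$ on which $w_1$ vanishes (carrying the symplectic or $\left(\begin{smallmatrix}0&1\\1&1\end{smallmatrix}\right)$ form) and one element $e_h$ with $w_1(e_h)=1$. Upstairs, the $2(h-1)$ classes $\widetilde e_j',\widetilde e_j''$ (lifts of the $e_j$, $j<h$) span $H_1(\widetilde\Sigma;\Z_2)$, subject to $\widetilde e_j'+\widetilde e_j''=\mathrm{tr}(e_j)$, and one checks $\dim H_1(\widetilde\Sigma;\Z_2)=2g=2(h-1)$ matches. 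I then \emph{define} $\widetilde q$ on generators by decreeing $\widetilde q(\widetilde e_j')=\tfrac12 q(e_j)$, $\widetilde q(\widetilde e_j'')=\tfrac12 q(e_j)$, and extend by the quadratic-form rule $\widetilde q(\alpha+\beta)=\widetilde q(\alpha)+\widetilde q(\beta)+\alpha\cdot_{\widetilde\Sigma}\beta$. For this to be well-defined I must verify consistency on the single relation $\widetilde e_j'+\widetilde e_j''=\mathrm{tr}(e_j)$, i.e. that $\widetilde q(\mathrm{tr}(e_j))$ computed via the rule equals whatever the rule forces; this reduces to the intersection computation $\widetilde e_j'\cdot_{\widetilde\Sigma}\widetilde e_j''=0$, which holds because two lifts of the same simple closed curve $C$ with $w_1(C)=0$ are disjoint (they are the two boundary components of a bicollar cylinder neighbourhood of $C$ lifted, or rather the two sheets over a cylindrical neighbourhood). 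More generally I will need the lift-intersection formulas $\widetilde C'\cdot_{\widetilde\Sigma}\widetilde D'+\widetilde C'\cdot_{\widetilde\Sigma}\widetilde D''=C\cdot_\Sigma D$ (pullback of the intersection form along $\mathrm{tr}$ in one variable) and the symmetry between primed/double-primed lifts given by $\theta_*$, which let me reduce any intersection number upstairs to data downstairs.

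With $\widetilde q$ constructed, properties (i) and (ii) follow from the definition plus two short topological inputs. For (ii): a simple closed curve $C$ with $w_1(C)=0$ represents a class that is a $\Z_2$-sum $\sum_{j\in S}e_j$ (the $e_h$-coefficient is zero since $w_1$ of an admissible class is its $e_h$-coefficient); its lifts are $\widetilde C'=\sum_{j\in S}\widetilde e_j'+(\text{terms in }\mathrm{tr}(e_j))$ and one expands $\widetilde q$ by the quadratic rule, the cross-terms assembling — via the lift-intersection formula — into exactly $\tfrac12$ of the expansion of $q$ on $\sum_{j\in S}e_j$, giving $\widetilde q(\widetilde C')=\tfrac12 q(C)$; the same for $\widetilde C''$ by applying $\theta_*$, which fixes $\widetilde q$ (this invariance is itself forced by uniqueness, or checked directly since $\theta_*$ permutes the generators $\widetilde e_j'\leftrightarrow\widetilde e_j''$ and preserves the form and the assigned values). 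For (i): if $w_1(C)=1$ then $\widetilde C=p^{-1}(C)=\mathrm{tr}(C)$ is a single loop; writing $[C]=e_h+\sum_{j\in S}e_j$ in homology, I need $\widetilde q(\widetilde C)=0$, and $\widetilde C=\mathrm{tr}(e_h)+\sum_{j\in S}\mathrm{tr}(e_j)=0+\sum_{j\in S}(\widetilde e_j'+\widetilde e_j'')$ (using $\mathrm{tr}(e_h)=0$ in $H_1(\widetilde\Sigma;\Z_2)$, since the connected double cover of the last $\R P^2$-handle contributes nothing), and one computes $\widetilde q$ of this sum; the primed/double-primed contributions cancel in the right way — each pair $\widetilde e_j'+\widetilde e_j''$ contributes $q(e_j)+\widetilde e_j'\cdot\widetilde e_j''=q(e_j)+0$, but now as a $\Z_2$-valued quadratic form $\widetilde q$ these come in, with cross terms $\sum_{j<k}(\widetilde e_j'+\widetilde e_j'')\cdot(\widetilde e_k'+\widetilde e_k'')=\sum_{j<k}2(e_j\cdot e_k)=0$ over $\Z_2$ — so that the total telescopes to $0$. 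Uniqueness: the classes $\widetilde e_j'$ (for $j<h$), i.e. half of the lift classes, already span $H_1(\widetilde\Sigma;\Z_2)$ together with the $\mathrm{tr}(e_j)$, and property (ii) pins down $\widetilde q$ on all $\widetilde e_j'$ and $\widetilde e_j''$, hence on a spanning set, hence everywhere by the quadratic rule; property (i) is then a \emph{constraint} automatically satisfied rather than extra data — but one should note (i) plus (ii) on a spanning set still over-determine, so part of uniqueness is the consistency observation that (i) is implied, which is exactly the cancellation computation above.

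The main obstacle I expect is the bookkeeping of lifts versus deck transformations and getting the $\Z_2$-versus-$\Z_4$ coefficients exactly right: $q$ is $\Z_4$-valued with $q(\alpha)\equiv w_1(\alpha)\pmod 2$, whereas $\widetilde q=\tfrac12 q$-style values are genuinely $\Z_2$-valued (legitimate, since $w_1$ vanishes on $\widetilde\Sigma$), so every place where I "divide by $2$" must be justified by first checking the relevant class lies in $2\Z_2\subset\Z_4$ — which is exactly where $w_1(C)=0$ is used. The cleanest way to sidestep index wrangling is to phrase everything through the two algebraic maps $\mathrm{tr}$ and $p_*$ and the single identity $\widetilde C'\cdot_{\widetilde\Sigma}\widetilde C''=0$ for lifts of a two-sided simple closed curve, deriving all needed intersection numbers formally; I would spend the bulk of the write-up establishing that one geometric lemma and the pullback formula $\mathrm{tr}^*(\cdot_{\widetilde\Sigma})$ "in one slot" equals $\cdot_\Sigma$, after which (i), (ii) and uniqueness are short formal consequences.
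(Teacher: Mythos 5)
Your overall strategy coincides with the paper's: pick the standard basis adapted to the connected-sum decomposition, define $\widetilde q$ on the lifted basis curves via (ii), extend quadratically, and argue that (i) is automatic and uniqueness follows from spanning. However, there is a genuine gap: your central structural claim --- that the $2(h-1)$ lifts $\widetilde e_j',\widetilde e_j''$ of the $w_1=0$ basis curves span $H_1(\widetilde\Sigma;\Z_2)$, and that $\mathrm{tr}(e_h)=0$ --- holds only when the non-orientable genus $h$ is odd. When $h$ is even, $\Sigma$ is the connected sum of an orientable surface with a Klein bottle, and the last block of the intersection form is $\left(\begin{smallmatrix}0&1\cr 1&1\end{smallmatrix}\right)$ on a pair $\beta_1,\beta_2$ with $w_1(\beta_1)=0$, $w_1(\beta_2)=1$. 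The two lifts $\widetilde\beta_1',\widetilde\beta_1''$ of $\beta_1$ are then \emph{homologous} in $\widetilde\Sigma$ (they cobound one of the two copies of $\Sigma$ cut along $\beta_1$, which is an annulus; see Figure~\ref{fig:coverK}), so your $2(h-1)$ classes span only a subspace of dimension $2(h-2)+1=2h-3<2h-2=\dim H_1(\widetilde\Sigma;\Z_2)$; the dimension count you say ``one checks'' in fact fails here. The missing generator is precisely the connected preimage $\widetilde\beta_2=p^{-1}(\beta_2)$ of the one-sided basis curve, which is a \emph{nontrivial} class (on the torus over the Klein bottle it is one of the two standard generators), contradicting your assertion that the transfer of the $w_1=1$ basis element vanishes; that vanishing is special to the $\R P^2$-summand occurring when $h$ is odd.

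Consequently, for $h$ even condition (ii) alone does not determine $\widetilde q$: its value on $\widetilde\beta_2$ is unconstrained by (ii), and condition (i) must be used as genuine defining data there rather than being ``a constraint automatically satisfied'' as you claim. Both your existence check of (i) (which relies on $\mathrm{tr}(e_h)=0$) and your uniqueness argument (spanning by the $\widetilde e_j',\widetilde e_j''$) therefore break for $h$ even --- which includes the Klein bottle, the very case needed for Theorem~\ref{thm:2}. The paper's proof avoids this by splitting into the two parities: for $h$ odd, (ii) on the basis determines $\widetilde q$ and (i) is then automatic (via $\widetilde q(\widetilde\alpha'+\widetilde\alpha'')=\widetilde q(\widetilde\alpha')+\widetilde q(\widetilde\alpha'')+\widetilde\alpha'\cdot\widetilde\alpha''=0$ together with the null-homologous lift of the $\R P^2$-core); for $h$ even, the basis upstairs is $\widetilde\alpha_1',\widetilde\alpha_1'',\dots,\widetilde\alpha_{h-2}',\widetilde\alpha_{h-2}'',\widetilde\beta_1',\widetilde\beta_2$ and \emph{both} (i) and (ii) are needed to pin $\widetilde q$ down. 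Your argument is repairable by inserting exactly this case distinction.
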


\begin{proof}
Let us first assume that the genus~$h$ of~$\Sigma$ is odd. Then,~$\Sigma$ is homeomorphic to the connected sum of an orientable surface~$\Sigma_g$ of genus~$g=\frac{h-1}{2}$
with~$\R P^2$, so its homology splits as a direct sum~$H_1(\Sigma;\Z_2)= H_1(\Sigma_g;\Z_2)\oplus H_1(\R P^2;\Z_2)$. Let us fix a collection of simple closed
curves~$\alpha_1,\dots,\alpha_{h-1}$ representing a basis of~$H_1(\Sigma_g;\Z_2)$, and a non-trivial simple closed curve~$\beta\subset\R P^2$.
The orientation cover~$\widetilde{\Sigma}$ is obtained by gluing two copies of~$\Sigma$ cut along~$\beta$; therefore, a basis of~$H_1(\widetilde{\Sigma};\Z_2)$ is given
by the simple closed curves~$\widetilde\alpha'_1,\widetilde\alpha''_1,\dots,\widetilde\alpha'_{h-1},\widetilde\alpha''_{h-1}$, and condition~$(ii)$ alone determines~$\widetilde{q}$
uniquely. This~$\widetilde{q}$ automatically satisfies condition~$(i)$: indeed, any cycle~$C$ with~$w_1(C)=1$ represents a homology class of the form~$\alpha+\beta$ with~$\alpha\in H_1(\Sigma_g;\Z_2)$, which lifts to~$\widetilde{C}=\widetilde{\alpha}'+\widetilde{\alpha}''+\widetilde{\beta}$. Since the lift~$\widetilde{\beta}$ represents the trivial homology
class,~$\widetilde{q}(\widetilde{C})$ vanishes by condition~$(ii)$.

If~$h$ is even, then~$\Sigma$ is homeomorphic to the connected sum of an orientable surface~$\Sigma_g$ of genus~$g=\frac{h}{2}-1$ with the Klein bottle~$\K$,
so~$H_1(\Sigma;\Z_2)$ splits as~$H_1(\Sigma_g;\Z_2)\oplus H_1(\K;\Z_2)$ and one can fix a collection of simple closed curves~$\alpha_1,\dots,\alpha_{h-2}$ representing a
basis of~$H_1(\Sigma_g;\Z_2)$ together with simple closed curve~$\beta_1,\beta_2$ representing a basis of~$H_1(\K;\Z_2)$ with~$\beta_1\cdot\beta_2=1$,~$w_1(\beta_1)=0$
and~$w_1(\beta_2)=1$. Then,~$\widetilde{\Sigma}$ is obtained by gluing two copies of~$\Sigma$ cut along~$\beta_1$, so a basis of~$H_1(\widetilde{\Sigma};\Z_2)$ is given
by~$\widetilde\alpha'_1,\widetilde\alpha''_1,\dots,\widetilde\alpha'_{h-2},\widetilde\alpha''_{h-2},\widetilde\beta'_1,\widetilde\beta_2$. (Note that~$\widetilde\beta'_1$
and~$\widetilde\beta''_1$ represent the same homology class, see Figure~\ref{fig:coverK}.) Therefore, conditions~$(i)$ and~$(ii)$ uniquely determine~$\widetilde{q}$.
\end{proof}

\begin{figure}[t]
\labellist\small\hair 2.5pt
\pinlabel {$\beta_1$} at 680 175
\pinlabel {$\beta_1$} at 470 175
\pinlabel {$\beta_2$} at 520 225
\pinlabel {$\beta_2$} at 520 -5
\pinlabel {$\widetilde\beta''_1$} at 190 185
\pinlabel {$\widetilde\beta'_1$} at 360 185
\pinlabel {$\widetilde\beta'_1$} at -10 185
\pinlabel {$\widetilde\beta_2$} at 160 -10
\pinlabel {$\widetilde\beta_2$} at 160 225
\pinlabel {$\to$} at 415 110
\pinlabel {$\mathbb{T}$} at -30 115
\pinlabel {$\K$} at 700 115
\endlabellist
\centering
\includegraphics[width=0.6\textwidth]{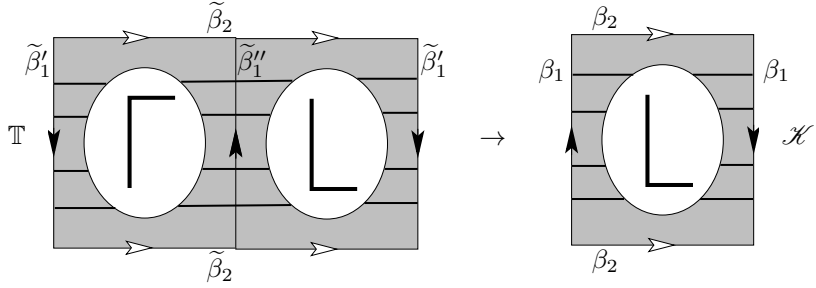}
\caption{A schematic description of the orientation cover of the Klein bottle by the torus, with explicit bases of~$H_1(\K;\Z_2)$ and~$H_1(\mathbb{T};\Z_2)$.}
\label{fig:coverK}
\end{figure}

\subsection{The general identities}
\label{sub:general}

Let~$G$ be a weighted graph embedded in a surface~$\Sigma$. Given any~$\alpha\in H_1(\Sigma;\Z_2)$ and~$D_0\in\D(G)$, let us define
\[
Z_{\alpha,D_0}(G)=\sum_{[D\Delta D_0]=\alpha}\nu(D)\,,
\]
where the sum is over all~$D\in\D(G)$ such that the homology class of the symmetric difference~$D\Delta D_0$ is equal to~$\alpha$.

\begin{theorem}
\label{thm:main}
Let~$G$ be a weighted graph with an even number of vertices, embedded in a non-orientable surface~$\Sigma$ in such a way that its complement consists of topological discs,
and let~$\widetilde{G}\subset\widetilde{\Sigma}$ denote the weighted graph obtained by lifting~$G$ via the orientation cover~$\widetilde{\Sigma}\to\Sigma$.
Then, for any quadratic enhancement~$q$ on~$\Sigma$ and any~$D_0\in\D(G)$,~$\widetilde{D}_0\in\D(\widetilde{G})$, we have
\[
\Big|\sum_{\widetilde\alpha}(-1)^{\widetilde{q}(\widetilde\alpha)} Z_{\widetilde\alpha,\widetilde{D}_0}(\widetilde{G})\Big|=
\Big(\sum_{w_1(\alpha)=0}(-1)^{\frac{q(\alpha)}{2}} Z_{\alpha,D_0}(G)\Big)^2+
\Big(\sum_{w_1(\alpha)=1}(-1)^{\frac{q(\alpha)-1}{2}} Z_{\alpha,D_0}(G)\Big)^2\,,
\]
where the first sum is over all~$\widetilde\alpha\in H_1(\widetilde{\Sigma};\Z_2)$, the second (resp. third) over all~$\alpha\in H_1(\Sigma;\Z_2)$ with~$w_1(\alpha)=0$
(resp.~$w_1(\alpha)=1$), and~$\widetilde{q}$ is the quadratic form on~$\widetilde{\Sigma}$ determined by~$q$ as in Proposition~\ref{prop:q}.
Finally, neither side of this equality depends on the choice of~$D_0$ and~$\widetilde{D}_0$.
\end{theorem}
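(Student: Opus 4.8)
The plan is to mimic, in the general setting, the chain of arguments that led to Theorem~\ref{thm:Mob} and Proposition~\ref{prop:Mob}, but replacing the elementary surgery arguments with the Pfaffian formalism of \cite{Cim}. Concretely, \cite{Cim} attaches to each quadratic enhancement $q$ on $\Sigma$ a ``Kasteleyn orientation'' $K=K_q$ on $G\subset\Sigma$ (defined via a pin$^-$-structure) such that the Pfaffian of the associated skew-adjacency matrix $A^K(G)$ equals, up to an explicit unit, the twisted sum $\sum_{\alpha}(-1)^{q(\alpha)/2}\dots$ appearing on the right-hand side — more precisely, grouping dimer configurations by the homology class $[D\Delta D_0]$ and using that the sign $\e^K(D)\e^K(D')$ depends only on $q([D\Delta D']\,)$, one gets
\[
\text{(unit)}\cdot\Pf(A^{K_q}(G))=\sum_{w_1(\alpha)=0}(-1)^{\frac{q(\alpha)}{2}}Z_{\alpha,D_0}(G)\;+\;i\sum_{w_1(\alpha)=1}(-1)^{\frac{q(\alpha)-1}{2}}Z_{\alpha,D_0}(G)\,,
\]
since $q(\alpha)\in\{0,2\}$ when $w_1(\alpha)=0$ and $q(\alpha)\in\{1,3\}$ when $w_1(\alpha)=1$, the factors of $i$ collecting exactly the $w_1=1$ part. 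Likewise, $\widetilde q$ on the orientable cover $\widetilde\Sigma$ yields a genuine Kasteleyn (Pfaffian) orientation $\widetilde K$ on $\widetilde G\subset\widetilde\Sigma$ with $\Pf(A^{\widetilde K}(\widetilde G))=\pm\sum_{\widetilde\alpha}(-1)^{\widetilde q(\widetilde\alpha)}Z_{\widetilde\alpha,\widetilde D_0}(\widetilde G)$; here $\widetilde q$ takes values in $2\Z_2$, so no $i$ appears and the twisted sum is the actual quantity inside the absolute value on the left.

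Second, I would prove the matrix identity: for a suitable ordering of the vertices of $\widetilde G$ (one copy of $G$, then the other copy in the same order), the orientation $\widetilde K$ produced by $\widetilde q$ via Proposition~\ref{prop:q} is, edge by edge, exactly the lift-and-flip orientation of $K=K_q$, in the sense of Lemma~\ref{lemma:Pf}. This is where Proposition~\ref{prop:q} does its work: conditions $(i)$ and $(ii)$ are precisely engineered so that the pin$^-$-structure on $\Sigma$ pulls back to the spin structure on $\widetilde\Sigma$ whose Kasteleyn orientation is the lifted one. Granting this, the block decomposition
\[
A^{\widetilde K}(\widetilde G)=\begin{pmatrix}M_1&M_2\\ M_2&-M_1\end{pmatrix},\qquad A^{K_q}(G)=M_1+iM_2
\]
holds verbatim as in Lemma~\ref{lemma:Pf}, with the complex coefficients on the $\Sigma$-side coming from the $\pm i$ appearing across the ``cuts'' defining the pin$^-$-structure (the generalisation of the weight $\omega$ of Section~\ref{sec:Mob}). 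The same determinant manipulation then gives $|\Pf(A^{\widetilde K}(\widetilde G))|=|\Pf(A^{K_q}(G))|^2$.

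Finally I would assemble the pieces: the right-hand side of the theorem is $|\Pf(A^{K_q}(G))|^2$ (real part squared plus imaginary part squared of the complex number above), the left-hand side is $|\Pf(A^{\widetilde K}(\widetilde G))|$, and the matrix identity equates them. The last clause — independence of $D_0$ and $\widetilde D_0$ — follows because changing $D_0$ only permutes the labels $\alpha\mapsto\alpha+[D_0\Delta D_0']$ and, since $q$ is a quadratic enhancement, multiplies each twisted sum by a global sign $(-1)^{q([D_0\Delta D_0'])/2}$ (and similarly $\pm1$ on the cover), which disappears after squaring and taking absolute values; alternatively it is automatic once one knows both sides equal Pfaffians of matrices not involving $D_0$.

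The main obstacle I anticipate is the second step: verifying that the orientation $\widetilde K$ attached by \cite{Cim} to the pulled-back quadratic form $\widetilde q$ literally coincides with the combinatorial lift-and-flip of $K_q$, rather than merely being Kasteleyn-equivalent to it. In Section~\ref{sec:Mob} this was checked by hand using the explicit rectangle picture (Figure~\ref{fig:M} and Equation~(\ref{equ:Mob})); in the general case one must instead track how the pin$^-$-structure, the ``cut system'' encoding $\omega$, and the covering map interact, and confirm that Proposition~\ref{prop:q}'s normalisation $(i)$--$(ii)$ is exactly the compatibility condition needed. Once that bookkeeping is pinned down, the rest is a direct transcription of the $h=1$ argument.
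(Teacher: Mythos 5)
Your overall architecture is the same as the paper's: use the Pfaffian formulas of~\cite{Cim} on both~$\Sigma$ and~$\widetilde{\Sigma}$, prove the block-matrix identity of Lemma~\ref{lemma:Pf} to get~$|\Pf(A^{\widetilde{K}}(\widetilde{G}))|=|\Pf(A^{K,\omega}(G))|^2$, split the~$\Sigma$-side sum into real and imaginary parts according to~$w_1(\alpha)=q(\alpha)\bmod 2$, and deduce independence of~$D_0,\widetilde{D}_0$ from the fact that both sides are (absolute values of) Pfaffians. All of that matches the paper and is correct.

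The genuine gap is the step you yourself flag as the ``main obstacle,'' and it is not merely bookkeeping left to be pinned down: as you have set it up, it would not go through. You propose to start from~$\widetilde{q}$ (obtained from~$q$ via Proposition~\ref{prop:q}), let~\cite{Cim} produce a Kasteleyn orientation~$\widetilde{K}$ realizing~$\widetilde{q}$, and then verify that this~$\widetilde{K}$ coincides \emph{edge by edge} with the lift-and-flip of~$K_q$. But the correspondence in~\cite{Cim} between Kasteleyn orientations and quadratic forms is many-to-one (orientations are classified only up to equivalence by vertex flips), so there is no canonical orientation attached to~$\widetilde{q}$ and literal coincidence is both unavailable and unnecessary. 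The paper runs the logic in the opposite direction: it \emph{defines}~$\widetilde{K}$ to be the lift-and-flip of~$K$ (so that the block decomposition of Lemma~\ref{lemma:Pf} holds by construction), checks that this~$\widetilde{K}$ is Kasteleyn on~$\widetilde{G}\subset\widetilde{\Sigma}$, and then \emph{computes} the quadratic form~$\widetilde{q}^{\widetilde{K}}_{\widetilde{D}_0}$ it induces, using the explicit formula~$\widetilde{q}(\widetilde{C})=n^{\widetilde{K}}(\widetilde{C})+\ell_{\widetilde{D}_0}(\widetilde{C})+1$ from~\cite[Proposition~3.7]{Cim}. One then verifies that this form satisfies conditions~$(i)$ and~$(ii)$ of Proposition~\ref{prop:q} --- e.g.\ for~$w_1(C)=1$ one computes modulo~$2$ that~$n^{\widetilde{K}}(\widetilde{C})=n^K(C^0)+n^K(-C^0)+2n^K(C^1)=|C^0|=|C|+1$ and~$\ell_{\widetilde{D}_0}(\widetilde{C})\equiv|C|$, whence~$\widetilde{q}(\widetilde{C})=0$ --- and since those conditions determine~$\widetilde{q}$ uniquely, the identification is complete. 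Without this computation (or an equivalent one) your right-hand side carries the enhancement~$q$ while your left-hand side carries some quadratic form on~$\widetilde{\Sigma}$ with no established relation to~$q$, and the theorem as stated is not proved.
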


\begin{remark}
\label{rem:count}
Recall that if~$\Sigma$ is of (non-orientable) genus~$h\ge 1$, then there are~$2^h$ quadratic enhancements on it, and~$\widetilde{\Sigma}$ is of (orientable) genus~$g=h-1$.
Furthermore, one easily checks that the map~$q\mapsto\widetilde q$ is two-to-one: indeed, picking a basis of~$H_1(\Sigma;\Z_2)$ with~$w_1$ non-vanising on a single
element,~$\widetilde q$ is not affected by changing the value of~$q$ on this element. As the right-hand side of the equality in Theorem~\ref{thm:main} is also left
unchanged by this transformation, we actually have exactly~$2^{h-1}=2^g$ distinct equalities.
\end{remark}

\begin{example}
\label{ex:h=1}
Let us consider the case where~$\Sigma$ is of genus~$h=1$, i.e. the projective plane~$\R P^2$, whose orientation cover~$\widetilde{\Sigma}$ is the sphere.
There are two quadratic enhancements on~$\R P^2$, assigning the value~$q(\alpha)=1$ or~$q(\alpha)=3$ to the unique non-trivial class~$\alpha=1$ (and the value~$q(\alpha)=0$
to the trivial class~$\alpha=0$), and a single (trivial) quadratic form on the sphere. Therefore, in this case, Theorem~\ref{thm:main} gives a single equality, namely
\[
Z(\widetilde{G})=Z_{0,D_0}(G)^2+Z_{1,D_0}(G)^2\,.
\]
Removing a small disk from~$\Sigma$ and its lifts from~$\widetilde{\Sigma}$, we obtain the~$2$-fold cover~$\C\to\M$ considered in Section~\ref{sec:Mob}.
Hence, the equality above is exactly Proposition~\ref{prop:Mob}, of which an elementary proof was given in Section~\ref{sec:Mob}. 
\end{example}

\begin{example}
\label{ex:h=2}
Let us now consider the case of genus~$h=2$, i.e. the Klein bottle~$\K$ whose orientation cover is the torus~$\mathbb{T}$.
As usual, fix a basis of~$H_1(\K;\Z_2)$ given by two simple closed curves~$\beta_1,\beta_2$ with~$\beta_1\cdot\beta_2=1$,~$w_1(\beta_1)=0$ and~$w_1(\beta_2)=1$, which determines
a basis~$\widetilde\beta'_1,\widetilde\beta_2$ of~$H_1(\mathbb{T};\Z_2)$, as described in Figure~\ref{fig:coverK}.
There are~$4$ quadratic enhancements on~$\K$, depending on the values of~$q(\beta_1)\in\{0,2\}$ and of~$q(\beta_2)\in\{1,3\}$.
By definition, the corresponding quadratic form~$\widetilde{q}$ on~$\mathbb{T}$ vanishes on~$\widetilde\beta_2$ and takes the value~$\frac{q(\beta_1)}{2}$ on~$\widetilde\beta'_1$.
Writing~$\e$ for~$(-1)^{\frac{q(\beta_1)}{2}}$, using the equalities~$q(\beta_1+\beta_2)=q(\beta_1)+q(\beta_2)+2$
and~$\widetilde{q}(\widetilde\beta'_1+\widetilde\beta_2)=\widetilde{q}(\widetilde\beta'_1)+1$, and leaving~$D_0,\widetilde D_0$ implicit, we obtain the two identities
\begin{equation}
\label{equ:h=2}
|Z_{00}(\widetilde{G})+Z_{01}(\widetilde{G})+\e Z_{10}(\widetilde{G})-\e Z_{11}(\widetilde{G})|=\left(Z_{00}(G)+\e Z_{10}(G)\right)^2+\left(Z_{01}(G)-\e Z_{11}(G)\right)^2
\end{equation}
for~$\e=\pm 1$. The proof of Theorem~\ref{thm:2} is now straightforward, as explained below.
\end{example}

\begin{proof}[Proof of Theorem~\ref{thm:2}]
Since~$G\subset\K$ is locally bipartite, the map assigning to each cycle its length induces a well-defined linear map~$\ell\colon H_1(\K;\Z_2)\to\Z_2$, which
is not identically zero since~$G$ is not bipartite. Note that~$Z_\alpha(G)$ vanishes as soon as~$\ell(\alpha)=1$.
By assumption, there exists~$D_0\in\D(G)$ such that the homology class of~$D_0\Delta\tau(D_0)$ is~$\beta_2$ or~$\beta_1+\beta_2$.
Let us first assume that~$[D_0\Delta\tau(D_0)]=\beta_2$.
Since~$D_0\Delta\tau(D_0)$ is of even length, we then have~$\ell(\beta_2)=0$ and therefore~$\ell(\beta_1)=1$ and~$\ell(\beta_1+\beta_2)=1$.
This implies that~$Z_{10}(G)$ vanishes, as well as~$Z_{11}(G)$.
Using the length function on the torus (which is determined by~$\ell$), we find that~$Z_{10}(\widetilde{G})=Z_{11}(\widetilde{G})=0$. Hence, Equation~(\ref{equ:h=2}) boils down to~$Z(\widetilde{G})=Z_{00}(G)^2+Z_{01}(G)^2$.
The facts that~$G$ is invariant by a horizontal tranlation~$\tau$ and that~$[D_0\Delta\tau(D_0)]=\beta_2$ allows us to construct a weight-preserving bijection between the dimer
configurations contributing to~$Z_{00}(G)$ and those contributing to~$Z_{01}(G)$, as in the proof of Theorem~\ref{thm:1}. The result follows.
The case where~$[D_0\Delta\tau(D_0)]=\beta_1+\beta_2$ is similar, and therefore left to the reader.
\end{proof}

\begin{remark}
\label{rem:other}
Obviously, other formulas relating~$Z(\widetilde{G})$ to~$Z(G)$ can be derived from Theorem~\ref{thm:main}.
For example, using Equation~(\ref{equ:h=2}) and the notation in the discussion of this~$h=2$ case, one can show the following statement:
{\em If~$G\subset\K$ is locally bipartite and such that~$\ell(\beta_1)=0$,~$\ell(\beta_2)=1$ and~$Z_{10}(\widetilde{G})=0$ or~$Z_{11}(\widetilde{G})=0$, then~$Z(\widetilde{G})=Z(G)^2$.}
However, such a statement is not as satisfactory as Theorems~\ref{thm:1} and~\ref{thm:2} above, as it does assume a vanishing condition that is difficult to obtain as a consequence
of a natural property of the graph. Note also that the number of such conditions will grow exponentially with the genus~$h$ of~$\Sigma$. For this reason, we do not expect to
obtain very compelling corollaries in higher genus. Of course, Theorem~\ref{thm:main} might nevertheless explain other ``curious identities'' obtained for specific examples.
\end{remark}

We conclude this article with the proof of our main result.

\begin{proof}[Proof of Theorem~\ref{thm:main}]
Fix a weighted graph~$G$ with an even number of vertices, embedded in a non-orientable surface~$\Sigma$ so that~$\Sigma\setminus G$ consists of topological discs.
As explained in subsection~\ref{sub:hom}, one can find a simple closed curve~$\beta$, transverse to the graph~$G$, so that the orientation cover~$\widetilde{\Sigma}$ of~$\Sigma$
can be constructed by gluing along their boundary two copies of~$\Sigma$ cut along~$\beta$. This determines a map~$\omega$ on the edges of~$G$ by setting~$\omega(e)=1$
if~$e$ intersects~$\beta$, and~$\omega(e)=0$ else. With such a data in hand, any orientation~$K$ on the edges of~$G$ induces a modified weighted skew-adjacency matrix~$A^{K,\omega}(G)$
as defined by Equation~(\ref{equ:A'}).

The right notion of a Kasteleyn orientation in this setting in the following generalization of the one introduced in subsection~\ref{sub:Mob}. Consider the weighted
graph~$\widetilde{G}\subset\widetilde{\Sigma}$ obtained taking two copies of~$G\subset\Sigma$ cut along the edges with~$\omega(e)=1$, and glued back again along these edges, each such
edge joining the two copies. Given an orientation~$K$ on~$G$, let~$\widetilde{K}$ be the orientation on~$\widetilde{G}$ obtained by lifting~$K$ to~$\widetilde{G}$, and inverting it on
all edges of~$\widetilde{G}$ completely contained in one of the copies of~$\Sigma$ cut along~$\beta$. Then, we shall say that~$K$ is a Kasteleyn orientation
on~$(G\subset\Sigma,\omega)$ if~$\widetilde{K}$ is a Kasteleyn orientation on~$\widetilde{G}\subset\widetilde{\Sigma}$, i.e: for any face~$\widetilde{f}$
of~$\widetilde{G}\subset\widetilde{\Sigma}$,~$n^{\widetilde{K}}(\partial\widetilde{f})$ is odd, where the boundary of~$\widetilde{f}$ is oriented according to a fixed orientation
on the surface~$\widetilde{\Sigma}$.

By~\cite[Theorem~4.3]{Cim},~$(G\subset\Sigma,\omega)$ admits a Kasteleyn orientation if and only if~$G$ has an even number of vertices, which we assumed. As proved
in~\cite[p.174]{Cim}, if~$K$ is such an orientation, then for any~$D_0\in\D(G)$, we have
\[
i^{-\omega(D_0)}\,\e^K(D_0)\,\Pf(A^{K,\omega}(G))=\sum_{\alpha\in H_1(\Sigma;\Z_2)}i^{-q(\alpha)}Z_{\alpha,D_0}(G)\,,
\]
where~$q=q^{K,\omega}_{D_0}$ is a quadratic enhancement on~$\Sigma$. By~\cite[Theorem~5.3]{Cim}, any quadratic enhancement on~$\Sigma$ can be obtained in such a way by varying the
orientation~$K$. Applying the~$\omega=0$ case of this formula to~$\widetilde{G}\subset\widetilde{\Sigma}$, we get
\[
\e^{\widetilde{K}}(\widetilde{D}_0)\,\Pf(A^{\widetilde{K}}(\widetilde{G}))=
\sum_{\widetilde\alpha\in H_1(\widetilde\Sigma;\Z_2)}(-1)^{\widetilde q(\widetilde\alpha)}Z_{\widetilde\alpha,\widetilde{D}_0}(\widetilde{G})\,,
\]
where~$\widetilde q=\widetilde{q}^{\widetilde{K}}_{\widetilde{D}_0}$ is a quadratic form on~$\widetilde\Sigma$.
The proof of Lemma~\ref{lemma:Pf} can be applied {\em verbatim\/}, leading to~$|\Pf(A^{\widetilde{K}}(\widetilde{G}))|=|\Pf(A^{K,\omega}(G))|^2$. This equality,
together with the two equations displayed above, and the fact that a quadratic enhancement always satisfies~$q(\alpha)-w_1(\alpha)\in 2\Z_2$, lead to
\[
\Big|\sum_{\widetilde\alpha}(-1)^{\widetilde{q}(\widetilde\alpha)} Z_{\widetilde\alpha,\widetilde{D}_0}(\widetilde{G})\Big|=
\Big(\sum_{w_1(\alpha)=0}(-1)^{\frac{q(\alpha)}{2}} Z_{\alpha,D_0}(G)\Big)^2+
\Big(\sum_{w_1(\alpha)=1}(-1)^{\frac{q(\alpha)-1}{2}} Z_{\alpha,D_0}(G)\Big)^2\,.
\]
Also, we see by the same two equations displayed above that neither side of this equality depends on the choice of~$D_0$ and~$\widetilde{D}_0$.

Therefore, we are left with the proof that the map~$q=q^{K,\omega}_{D_0}\mapsto\widetilde{q}^{\widetilde{K}}_{\widetilde{D}_0}=\widetilde{q}$ corresponds to the
map~$q\mapsto\widetilde{q}$ described in Proposition~\ref{prop:q}. To do so, let us assume without loss of generality that~$\widetilde{D}_0$ is the lift of~$D_0$.
By~\cite[Proposition~3.7]{Cim},~$\widetilde{q}$  satisfies the following property: if~$\widetilde{C}$ is an oriented simple closed curve
in~$\widetilde{G}\subset\widetilde{\Sigma}$, then
\[
\widetilde{q}(\widetilde{C})=n^{\widetilde{K}}(\widetilde{C})+\ell_{\widetilde{D}_0}(\widetilde{C})+1\,,
\]
where the second term denotes the number of vertices of~$\widetilde{C}$ whose adjacent dimer of~$\widetilde{D}_0$ sticks out to the left of~$\widetilde{C}$. (Recall that both
~$\widetilde{C}$ and~$\widetilde{\Sigma}$ are oriented.) To check the first condition of Proposition~\ref{prop:q}, let us apply this formula to the lift of an oriented simple
closed curve~$C$ in~$G\subset\Sigma$ with~$w_1(C)=1$. Writing~$C^0$ (resp.~$C^1$) for the edges of~$C$ with~$\omega(e)=0$ (resp.~$\omega(e)=1$), and computing modulo~$2$, we get
\[
n^{\widetilde{K}}(\widetilde{C})=n^K(C^0)+n^K(-C^0)+2n^K(C^1)=|C^0|=|C|+w_1(C)=|C|+1\,.
\]
Furthermore,~$\ell_{\widetilde{D}_0}(\widetilde{C})$ is given by the number of vertices of~$C$ whose adjacent dimer of~$D_0$ sticks out of~$C$. Since this number has
the same parity as~$|C|$, we have~$\widetilde{q}(\widetilde{C})=0$ as expected. To show that~$\widetilde{q}$ satisfies the second condition
of Proposition~\ref{prop:q}, first note that we can assume~$C$ to be contained in one copy of~$\Sigma$ cut along~$\beta$. In such a case,~\cite[Proposition~5,2]{Cim} gives the equality
\[
q(C)=2(n^{K}(C)+\ell_{D_0}(C)+1)\,,
\]
where the vertices contributing to~$\ell_{D_0}(C)$ are counted using an orientation on one of the copies of~$\Sigma$ cut along~$\beta$. By the value
of~$\widetilde{q}(\widetilde{C}')$ displayed above, the second condition is verified, and the proof complete.
\end{proof}

\bibliographystyle{plain}
\nocite{*}
\bibliography{bibliographie}

\end{document}